\def\Im {\mathop{\rm Im}\nolimits}
\def\arg {\mathop{\rm arg}\nolimits}
\def\Re {\mathop{\rm Re}\nolimits}
\def\Res {\mathop{\rm Res}\nolimits}
\newtheorem{pro}{Proposition}
\newtheorem{thm}{Theorem}[section]
\theoremstyle{remark}
\newtheorem{rem}[thm]{Remark}
\newtheorem{rhp}[thm]{RH problem}
\numberwithin{equation}{section}
\begin{document}
\title{Asymptotics of the  partition function of the  perturbed Gross-Witten-Wadia unitary matrix model}

\author[a]{Yu Chen}
\author[b,\thanks{Corresponding author. E-mail address: xushx3@mail.sysu.edu.cn}] {Shuai-Xia Xu} 
\author[a]{Yu-Qiu Zhao}
\affil[a]{ Department of Mathematics, Sun Yat-sen University, Guangzhou, 510275,
China}
\affil[b]{Institut Franco-Chinois de l'Energie Nucl\'{e}aire, Sun Yat-sen University, Guangzhou, 510275, China}
\date{}
\maketitle

\begin{abstract}
We consider the asymptotics of the  partition function of the  extended Gross-Witten-Wadia unitary matrix model by introducing
an extra logarithmic term in the potential. The partition function can be written as  a Toeplitz determinant with entries expressed in terms of the modified Bessel functions of the first kind and furnishes  a $\tau$-function sequence of the Painlev\'e III$' $ equation. We derive the  asymptotic expansions
of the Toeplitz determinant up to and including the constant terms as the size of the determinant tends to infinity.  The constant terms therein are expressed in terms of the Riemann  zeta-function and the Barnes  $G$-function.  A
 third-order phase transition in the leading terms of the asymptotic expansions is also observed.

\end{abstract}

\tableofcontents

\section{Introduction}

In this paper, we consider the  partition function of the extended Gross-Witten-Wadia unitary matrix model
\begin{equation}\label{eq:URMT}
Z_{n,\nu}(t)=\frac{1}{n!}\left(\prod_{k=1}^{n}\int_{\Gamma}\frac{ds_{k}}{2\pi is_{k}}\right)\Delta(\mathbf{s})\Delta(\mathbf{s}^{-1})e^{\sum\limits_{k=1}^{n} \frac{t}{2}(s_{k}+s_{k}^{-1})+\nu \log s_k},
\end{equation}
where $\nu\in\mathbb{C}$, $t>0$,  $\Delta(\mathbf{s})=\prod_{i<j}(s_{i}-s_{j})$ is the Vandermonde determinant and  the branch of each $\log s_k$ is chosen such that $\arg s_k\in(-\pi,\pi)$.  Here, the path of integration $\Gamma$ is the Hankel  loop, which starts at $-\infty$, encircles the origin once in the positive direction and returns to  $-\infty$; as illustrated in Figure \ref{hankelloop}.
\begin{figure}[h]
  \centering
  \includegraphics[width=6.5cm,height=4cm]{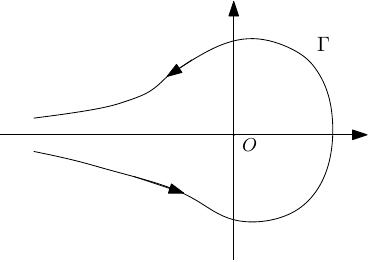}\\
  \caption{The contour $\Gamma$}\label{hankelloop}
\end{figure}

%

When $\nu\in\mathbb{Z}$, the   integration path can be deformed to the unit circle. Therefore,  setting $\nu=0$ simplifies the model \eqref{eq:URMT} to the classical Gross-Witten-Wadia unitary matrix model \cite{GW,W}.
For general $\nu\in\mathbb{Z}$, the partition function \eqref{eq:URMT}  represents
a certain average over the Gross-Witten-Wadia unitary matrix model. The partition function  \eqref{eq:URMT} can be written as the Toeplitz determinant whose entries are the modified Bessel functions; see \eqref{TopBes} below. Using the modified Bessel function as a seed solution,  it was shown that the partition function \eqref{eq:URMT}  is a $\tau$-function sequence in Okamoto's Hamiltonian formulation of Painlev\'e III$'$ equation; see  \cite[Proposition 2]{FW-2003} and \cite{FW}. 



For non-integer value of $\nu$, the  logarithmic function in the potential in \eqref{eq:URMT}
has a branch point at the origin. In \cite[page 165 ]{FW}, to handle this,  the complex plane is cut along the negative real axis and the contour of integration is deformed from the unit circle  to the Hankel loop as shown in Figure \ref{hankelloop}. 
The contour deformation ensures that the partition function \eqref{eq:URMT} keeps the  Toeplitz determinant  structure   with entries given in terms of modified Bessel functions, as shown in \eqref{TopBes} below.  Therefore,  the  structure of $\tau$-function of the Painlev\'e III$' $ equation are maintained for the partition function \eqref{eq:URMT} with general complex value of $\nu$; see \cite[Proposition 2]{FW-2003} and  also  \cite[Theorem 1.1]{CXZ}.

The matrix model \eqref{eq:URMT} with  general parameter $\nu$ has also arisen recently in the  studies of  irregular conformal block  \cite{IOY,IOY19-2, IOY10}.
As shown in \cite[Eqs.(F.3) and (F.4)]{IOY19-2}, the conformal block can be represented by the partition function of certain matrix model of size $n$.
By taking suitable scaling limits of the parameters in the partition function, they arrive at
 the  extended Gross-Witten-Wadia unitary matrix model of size $n$ by introducing an extra logarithmic term in the potential  \cite[Eqs. (F.29) and (F.31)]{IOY19-2}, which is identified with the irregular conformal block.
Generally, the matrix model considered in \cite{IOY19-2} depends on two types of integration loops.  As shown in \cite[Eq. (F.29)]{IOY19-2},
the first $N_1$ out of the $N$ integration contours in the matrix model are along the same loop $f(\Gamma)$ with the transformation $f(z)=-\frac{1}{z}$,  and the remaining  integration contours are along the loop $e^{i\pi}\Gamma$, where $\Gamma$ is the Hankel loop.
If the coefficient of the logarithmic term is an integer, the branch cut vanishes and both the contours can be deformed to the unit circle. The model  then represents
a certain average over the Gross-Witten-Wadia unitary matrix model as mentioned above; see also  \cite[Eqs. (2.6) and (3.1)]{IOY19-2}.
For general parameter $\nu$ and $N_1=0$,  the multiple integral  \cite[Eqs. (F.31)]{IOY19-2} is along the same contour $e^{i\pi}\Gamma$  with
 $\Gamma$  being the Hankel loop, and   the matrix model \cite[Eqs. (F.29) and (F.31)]{IOY19-2} is  equivalent to \eqref{eq:URMT} after a change of variables $z_k=e^{i\pi} s_k$ for $k=1,2,\dots, n$.

Motivated by the applications in the irregular conformal block and  the $\tau$-function theory of Painlev\'e equations, we consider the extended Gross-Witten-Wadia matrix model \eqref{eq:URMT} posed on the Hankel loop. As mentioned before, the contour deformation from the unit circle to the Hankel loop also allows the partition function \eqref{eq:URMT}  to be expressed in the structure of  Toeplitz determinant  whose entries are  the modified Bessel functions with order depending on the parameter $\nu$. To be more precise,
we denote by
 $D_{n,\nu}(t)$ the Toeplitz determinant associated with the weight function 
\begin{equation}\label{eq: weight}
w(z)=e^{\frac{t}{2}(z+\frac{1}{z})+\nu \log z}, \quad  z\in \Gamma, \quad t>0, \quad \nu\in \mathbb{C}, 
\end{equation}
where the branch of $\log z$ is chosen such that $\arg z\in(-\pi,\pi)$. That is,
\begin{equation}\label{eq: Top}
D_{n,\nu}(t)=\det\left(m_{j-i}\right)_{i,j=0}^{n-1},\end{equation}
with the moments
\begin{equation}\label{eq: moments}
m_k=m_k(t)=\int_{\Gamma} s^{k} w(s) \frac{ds}{2\pi is}, \quad k\in\mathbb{Z}.
 \end{equation}
Then,  the partition function \eqref{eq:URMT} can be expressed equivalently in terms of the Toeplitz determinant
\begin{equation}\label{eq: =Top}
Z_{n,\nu}(t)=D_{n,\nu}(t). \end{equation}
From the integral representation of the modified Bessel function of the first kind, also termed the $I$-Bessel function, we obtain
\begin{equation}\label{eq: Bes}
m_k(t)=I_{-k-\nu}(t),  \end{equation}
where $I_{\alpha}(t)=\left(\frac{t}{2}\right)^{\alpha}\sum_{j=0}^{\infty} \frac{  \left(t^2/4\right)^j }  {j!\Gamma(j+1-\alpha)}$ denotes the $I$-Bessel function of order $\alpha$; see \cite[Eqs.(10.9.19), (10.25.2) and (10.27.6)]{Olver}.
Therefore,  we may express the partition function and the  Toeplitz determinant in terms of the determinant of the $I$-Bessel functions
\begin{equation}\label{TopBes}
Z_{n,\nu}(t)= D_{n,\nu}(t)=\det \left(I_{i-j-\nu}(t)\right)_{i,j=0}^{n-1}.
\end{equation}
It was shown  in \cite[Proposition 2]{FW-2003} and   \cite[Theorem 1.1]{CXZ} that  $Z_{n,\nu}(t)$ is a $\tau$-function sequence of the Painlev\'e III$' $ equation with  general parameter $\nu$.


Denote  $\{\pi_n\}_{n\in\mathbb{N}}$ and $ \{\tilde{\pi}_n\}_{n\in\mathbb{N}}$ the families   of monic orthogonal polynomials  defined by the orthogonality on the Hankel loop depicted in Figure \ref{hankelloop}:
 \begin{equation}\label{eq:MOrtho}
\int_{\Gamma} \pi_n(s) \tilde{\pi}_m(s^{-1}) w(s) \frac{ds}{2\pi is}=h_n\delta_{n,m}, \end{equation}
where $w(z)$ is given in \eqref{eq: weight}. 
Suppose the determinant $D_{n,\nu}(t)$ 
does not vanish,
then the orthogonal polynomials can  be constructed explicitly as follows:
\begin{equation}\label{eq: OPexpre}
\pi_n(z)=\frac{1}{D_{n,\nu} (t)}\left| \begin{array}{cccc}m_0 & m_{1} & \cdots & m_{n} \\m_{-1} & m_0 & \cdots & m_{-1+n} \\ \vdots &  \vdots &  \vdots &  \vdots \\
m_{-n+1} & m_{-n+2} & \cdots & m_{1} \\1 & z & \cdots & z^{n}\end{array} \right|,
 \end{equation}
 and
 \begin{equation}\label{eq: tildeOPexpre}
 \tilde{\pi}_n(z)=\frac{1}{D_{n,\nu}(t)}\left| \begin{array}{cccc}m_0 & m_{-1} & \cdots & m_{-n} \\m_1 & m_0 & \cdots & m_{1-n} \\ \vdots &  \vdots &  \vdots &  \vdots \\
m_{n-1} & m_{n-2} & \cdots & m_{-1} \\1 & z& \cdots & z^n\end{array} \right|.
\end{equation}
Then, the Toeplitz determinant can be written in terms of the constants $h_{k}$'s in \eqref{eq:MOrtho}
\begin{equation}\label{TOP-h}
D_{n,\nu}(t)=\prod\limits_{k=0}^{n-1}h_{k}.
\end{equation}
Similar to the orthogonal polynomials on the unit circle \cite[Theorem 11.4.2]{S} and  \cite[Lemma 2.3]{DIK},  we obtain from the 
orthogonality  \eqref{eq:MOrtho} that the orthogonal polynomials $\pi_{n}(z)$ and $\widetilde{\pi}_{n}(z)$ satisfy the following recurrence relations
\begin{equation}\label{three-term1}
\pi_{n+1}(z)=z\pi_{n}(z)+\pi_{n+1}(0)\,\widetilde{\pi}^{*}_n(z),
\end{equation}
\begin{equation}\label{three-term2}
\widetilde{\pi}^{*}_{n+1}(z)=\widetilde{\pi}^{*}_n(z)+\widetilde{\pi}_{n+1}(0)z\pi_{n}(z),
\end{equation}
where  $\widetilde{\pi}^*_{n}(z)=z^{n}\widetilde{\pi}_{n}(z^{-1})$ is the reversed polynomial associated with $\widetilde{\pi}_{n}(z)$.
We also have the Christoffel-Darboux formula 
\begin{equation}\label{eq:CD}
(1-a^{-1}z)\sum_{k=0}^{n-1}p_{k}(z) \tilde{p}_{k}(a^{-1})=a^{-n}p_{n}(a) z^n\tilde{p}_{n}(z^{-1})-\tilde{p}_{n}(a^{-1})p_{n}(z),
\end{equation}
where $p_{n}=\gamma_{n}\pi_{n}$, $\widetilde{p}_{n}=\gamma_{n}\widetilde{\pi}_{n}$,  $\gamma_{n}=h_{n}^{-1/2}$ are the orthonormal 
polynomials associated with the weight function \eqref{eq: weight}. The  Christoffel-Darboux formula can be derived by using  
the recurrence relations \eqref{three-term1} and \eqref{three-term2} in the same way as the orthogonal polynomials on the unit circle \cite[Lemma 2.3]{DIK}.

It is known  that the large-$n$ asymptotic expansion of the  partition function $Z_{n,\nu}(t)$ in \eqref{eq:URMT} with $\nu=0$ for Gross-Witten-Wadia model  exhibits a third-order phase transition \cite{GW}. This can be seen from the discontinuity at $\tau=1$ in the third derivative of the  leading term in the following expansion \cite{GW}
\begin{equation}\label{free-energy}
\lim_{n\rightarrow\infty}\frac{1}{n^2}\left(\log Z_{n,0}(n\tau)-   \frac{\tau^2}{4}\right )=\left\{ \begin{array}{ll}
                  0, & \hbox{$0<\tau<1$},\\
                       \tau-\frac{1}{2}\log\tau-\frac{3}{4}- \frac{\tau^2}{4}, & \hbox{$\tau>1$.}
                    \end{array}
\right.
\end{equation}
The above formula was  proved rigorously later in \cite{J}.
It is remarkable that in the double scaling limit as $n\to\infty$ and $\tau\to1$ in certain related speed, the asymptotics of the  partition function $Z_{n,0}(n\tau)$ can be expressed in terms of the Hastings-McLeod solution of the second Painlev\'e equation
\begin{equation}\label{eq:PII}
u''=2u^{3}+xu +\alpha,
\end{equation}
with the parameter $\alpha=0$; see \cite{C,PS}.

In the work \cite{CXZ}, we study the asymptotics of the partition function $Z_{n,\nu}(t)$ \eqref{eq:URMT} with general parameter $\nu\in\mathbb{C}$.
In the double scaling limit as $n\to\infty$ and $\tau\to1$,
we establish an asymptotic  approximation of the logarithmic derivative of the Toeplitz determinant, expressed in terms of  the Hamiltonian associated with the Hastings-McLeod solution of the  second Painlev\'e equation \eqref{eq:PII} with the parameter dependent on $\nu$. We obtain the result by studying the asymptotics of the  Toeplitz determinant \eqref{TopBes}, using the Deift-Zhou steepest descent analysis for the Riemann-Hilbert (RH, for short) problems for the orthogonal polynomials defined by \eqref{eq:MOrtho}.



In recent years, there has been a considerable amount of interest in the study of asymptotics
of Toeplitz determinants, due to their important applications in various branches of applied mathematics and mathematical physics.
In \cite{DIK}, the asymptotics of the Toeplitz determinants associated with
a general family of weight functions on the unit circle with any given fixed Fisher-Hartwig singularities were
derived. In \cite{BCL}, the asymptotics of the Toeplitz determinants  with general varying weight $e^{-nV(z)}$ perturbed by  Fisher-Hartwig singularities are derived, where the equilibrium measure associated with the potential $V(z)$  is assumed to be supported on the whole unit circle.
The  transition asymptotics of the Toeplitz determinants are established in several different situations where the singularities vary in $n$ \cite{CIK,CK2015,CC,XZ}. The reader is also referred to the survey article \cite{DIK2} for the historic background and applications in the Ising models.  As a follow-up of the investigations in \cite{BCL}, one may also consider broader classes of partition functions on the Hankel loop with more general potential than the special one in the extended Gross-Witten-Wadia matrix model  \eqref{eq:URMT}.
In the general case, one would need to consider the equilibrium measure on the Hankel loop and we will leave this problem to a future investigation.


In the present paper, we consider the asymptotics of the partition function $D_{n,\nu}(n\tau)$ in \eqref{TopBes} of the extended Gross-Witten-Wadia matrix model  as the matrix size $n$ tends to infinity when $0<\tau<1$ and $\tau>1$, separately. We derive the
large-$n$ asymptotic expansions
for $D_{n,\nu}(n\tau)$ up to and including the constant terms. Similar to \eqref{free-energy}, we observe a
 third-order phase transition in the leading terms of the asymptotic expansions near $\tau=1$.  Moreover, we evaluate the constant term in the  asymptotic expansions by using a certain differential identity with respect to the parameter $\nu$.

\subsection{Statement of results}
\begin{thm}\label{thm1}
Let $\nu\in \mathbb{C}$, $t=n\tau$ and $0<\tau<1$, we have the asymptotic approximation of the logarithm of the Toeplitz determinant
associated with \eqref{eq: weight}
\begin{equation}\label{Integral:logDn1}
\begin{aligned}
\log D_{n,\nu}(t)=&\frac{n^2\tau^2}{4}+n\nu\left(\log \frac{1+\sqrt{1-\tau^2}}{\tau} -\sqrt{1-\tau^2}\right)-\frac{\nu^2}{2}\log n +\frac{\nu}{2}\log2\pi\\
&-\frac{\nu^2}{4}\log(1-\tau^2)+\log G(1-\nu)+O\left(\frac{1}{n}\right),~~\nu\neq 1,2,3 \cdots,~~ n\rightarrow\infty,
\end{aligned}
\end{equation}
where  the error term is uniform for $\tau$ in any compact subsets of $(0,1)$.
For $\nu=1,2,3\cdots,$ the asymptotic behavior of ~$\log D_{n,\nu}(t)$ can be obtained from \eqref{Integral:logDn1} by
using the symmetry $D_{n,\nu}(t)=D_{n,-\nu}(t)$ for $  \nu\in\mathbb{N}$. 
\end{thm}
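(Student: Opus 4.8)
The plan is to reduce the evaluation of all $\nu$-dependent terms, and in particular the constant term, to a single integration in the parameter $\nu$, whose integrand is produced by a Deift--Zhou steepest-descent analysis of the Riemann--Hilbert (RH) problem for the orthogonal polynomials in \eqref{eq:MOrtho}. The base point of the integration is the Gross--Witten--Wadia value $\nu=0$, at which the right-hand side of \eqref{Integral:logDn1} collapses to the leading term $\tfrac{n^2\tau^2}{4}+O(1/n)$ already visible in \eqref{free-energy}. The first ingredient is a differential identity in $\nu$. Since $\partial_\nu w(s)=(\log s)\,w(s)$ by \eqref{eq: weight}, while $\partial_\nu\pi_k$ and $\partial_\nu\tilde\pi_k$ have degree at most $k-1$, the orthogonality \eqref{eq:MOrtho} annihilates the polynomial-derivative contributions, giving $\partial_\nu h_k=\int_\Gamma \pi_k(s)\tilde\pi_k(s^{-1})(\log s)\,w(s)\frac{ds}{2\pi i s}$. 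Summing $\partial_\nu\log h_k$ and invoking \eqref{TOP-h} yields
\[
\partial_\nu\log D_{n,\nu}(t)=\int_\Gamma(\log s)\,K_n(s,s)\,w(s)\,\frac{ds}{2\pi i s},\qquad K_n(s,s)=\sum_{k=0}^{n-1}\frac{\pi_k(s)\,\tilde\pi_k(s^{-1})}{h_k},
\]
where $K_n$ is the Christoffel--Darboux kernel, expressible through $\pi_n,\tilde\pi_n$ and hence through the RH solution $Y$ as the confluent limit of \eqref{eq:CD}.

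Next I would run the standard chain of transformations $Y\to T\to S\to R$ for $0<\tau<1$: normalize at infinity with the $g$-function of the equilibrium measure $\mu$ associated with the external field $\tfrac{\tau}{2}(s+s^{-1})$ on $\Gamma$; in this subcritical regime $\mu$ has full support with strictly positive density, the density first degenerating at $s=-1$ exactly at $\tau=1$, which is the analytic source of the phase transition. One opens lenses around the contour, builds the global parametrix from the Szeg\H{o} function of $w$, and installs a \emph{single} local parametrix at $s=-1$, where the branch of $\log s$ in \eqref{eq: weight} forces a Fisher--Hartwig-type jump of ratio $e^{2\pi i\nu}$; this parametrix is built from confluent hypergeometric functions and is the only one needed besides the global one, because the density does not vanish for $\tau<1$. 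The matching $R=I+O(1/n)$, uniform for $\tau$ in compact subsets of $(0,1)$ and for $\nu$ in compact sets avoiding the positive integers, then delivers uniform asymptotics of $K_n(s,s)$.

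Substituting these asymptotics into the differential identity and evaluating the contour integral (the leading piece being $n\int_\Gamma(\log s)\,d\mu(s)$, the subleading pieces coming from the $s=-1$ parametrix) is expected to produce
\[
\partial_\nu\log D_{n,\nu}(t)=n\left(\log\frac{1+\sqrt{1-\tau^2}}{\tau}-\sqrt{1-\tau^2}\right)-\nu\log n+\tfrac12\log 2\pi-\tfrac{\nu}{2}\log(1-\tau^2)+\partial_\nu\log G(1-\nu)+O\!\left(\tfrac1n\right).
\]
Integrating in $\nu$ from $0$, using $G(1)=1$ and fixing the $\nu$-independent integration constant by the $\nu=0$ asymptotics $\log D_{n,0}(t)=\tfrac{n^2\tau^2}{4}+O(1/n)$, reproduces \eqref{Integral:logDn1} term by term; the case $\nu\in\mathbb{N}$ then follows from the stated symmetry $D_{n,\nu}(t)=D_{n,-\nu}(t)$.

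The hard part is the second and third steps in tandem: constructing the confluent hypergeometric parametrix at $s=-1$ on the Hankel loop and extracting from its matching the \emph{exact} subleading coefficients. In particular, the Barnes $G$-function appears only because the parametrix asymptotics contain a digamma contribution $\psi(1-\nu)$, equivalently $(\log G)'(1-\nu)$, whose $\nu$-integral is $\log G(1-\nu)$; correctly isolating this term, together with the coefficient $-\tfrac12\nu^2$ of $\log n$ and the $\tau$-dependent $-\tfrac14\nu^2\log(1-\tau^2)$, and—crucially—controlling the error uniformly as $\nu$ approaches the positive integers where $G(1-\nu)$ develops zeros, is where the genuine technical effort lies.
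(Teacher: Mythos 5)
Your high-level skeleton --- a differential identity in $\nu$, fed with Deift--Zhou asymptotics and then integrated from the known value at $\nu=0$ --- is exactly the paper's strategy, and your identity $\partial_\nu\log D_{n,\nu}(t)=\int_\Gamma(\log s)K_n(s,s)\,w(s)\,\frac{ds}{2\pi i s}$ is correct: it is the unprocessed form of \eqref{eq:DDn}, which Proposition \ref{Pro: DifferentialIdentity} then converts, via the Christoffel--Darboux formula \eqref{Ch-D}, integration by parts and the recurrences \eqref{three-term1}--\eqref{three-term2}, into the local identity \eqref{eq:dnuD} involving only $Y(0;n)$, $Y(0;n+1)$ and $Y_{-1}$, precisely so that one never has to integrate parametrix asymptotics along the whole contour.

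The genuine gap is in the steepest-descent analysis, and it is fatal as described. For non-integer $\nu$ the RH problem carries an unremovable jump on the ray $(-\infty,-1)$ (see \eqref{eq:Yhatjump}); after normalization with $\phi(z)=\frac{\tau}{2}(z-z^{-1})+\log z$, the only nontrivial local behaviour sits at the saddle point $z^-=-(1+\sqrt{1-\tau^2})/\tau$ of $\phi$ \emph{on that ray} (see \eqref{eq:zpm}, \eqref{eq:phiSgn}); nothing special happens at $z=-1$, where the constant diagonal jump is absorbed by the elementary global parametrix \eqref{eq:NSolution} and all remaining jump entries are exponentially small because $\Re(\phi(z)-\phi(z^-))<0$ there. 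The correct local model at $z^-$ is the parabolic cylinder parametrix \eqref{solu:P}, which matches $N$ only to order $O(n^{-1/2})$ (see \eqref{eq:Matching}), forcing the two-term expansion \eqref{eq:REst}; the off-diagonal coefficient $R^{(1)}$ in \eqref{A1} is what produces \eqref{Asy:Y11}--\eqref{Asy:Y22} and the constant $d_1=\sqrt{2\pi}/\Gamma(1-\nu)$, whose integration in $\nu$ yields exactly $\frac{\nu}{2}\log 2\pi+\log G(1-\nu)$. Your proposal instead treats the determinant as a unit-circle Toeplitz determinant with a Fisher--Hartwig jump $e^{2\pi i\nu}$ at $z=-1$, resolved by a confluent hypergeometric parametrix matched to $O(1/n)$. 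That computes a provably different object: the circle determinant with symbol $e^{n\tau\cos\theta+i\nu\theta}$ is \emph{exactly} invariant under $\nu\to-\nu$ (replace $\theta\to-\theta$ and transpose), and the circle FH analysis accordingly yields even-in-$\nu$ data ($n^{-\nu^2}$ and $G(1+\nu)G(1-\nu)$, with no term linear in $n$), whereas \eqref{Integral:logDn1} is manifestly asymmetric: it contains $n\nu\bigl(\log\frac{1+\sqrt{1-\tau^2}}{\tau}-\sqrt{1-\tau^2}\bigr)$, $n^{-\nu^2/2}$, $\frac{\nu}{2}\log 2\pi$ and $G(1-\nu)$ alone. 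So the Hankel-loop and circle problems differ by exponentially large factors in $n$, and no error estimate can bridge them: the ray jump, though pointwise exponentially small away from $z^-$, attains its maximum at $z^-$ and drives the entire subleading structure that your construction misses. (A minor further point: uniformity of the error as $\nu$ approaches positive integers is not required; the theorem excludes them from \eqref{Integral:logDn1} and recovers them through the symmetry $D_{n,\nu}(t)=D_{n,-\nu}(t)$ for $\nu\in\mathbb{N}$.)
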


\begin{thm}\label{thm2}
Let $\nu\in \mathbb{C}$, $t=n\tau$ and $\tau>1$, we have the asymptotic approximation of the logarithm of the Toeplitz determinant
associated with \eqref{eq: weight}
\begin{equation}\label{Integral:logDn3}
\begin{aligned}
\log D_{n,\nu}(t)=&n^{2}\left(\tau-\frac{3}{4}-\frac{1}{2}\log \tau\right)-\frac{1}{12}\log n\\
&+\left(\frac{\nu^2}{2}-\frac{1}{8}\right)\log(1-\tau^{-1})
+\zeta'(-1)+O\left(\frac{1}{n}\right ),~~n\rightarrow\infty,
\end{aligned}
\end{equation}
where $\zeta'(t)$ is the    derivative of the Riemann $\zeta$-function and  the error term is uniform for $\tau$ in any compact subsets of $(1,+\infty)$.
\end{thm}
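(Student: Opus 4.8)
The plan is to obtain the expansion from a Deift--Zhou steepest-descent analysis of the Riemann--Hilbert problem for the orthogonal polynomials $\pi_n,\tilde\pi_n$ of \eqref{eq:MOrtho}, combined with a differential identity in $\nu$ that promotes the leading analysis to the constant term. Since $\tau>1$ is the strong-coupling regime, the relevant equilibrium measure on the unit circle is supported on a single arc $\{e^{i\theta}:|\theta|\le\theta_c(\tau)\}$ concentrated around $z=1$, with two soft edges. First I would build the $g$-function from this arc-supported measure and run the standard chain $Y\mapsto T\mapsto S\mapsto R$: opening lenses along the arc, constructing a global parametrix from the Szeg\H{o} function of the weight \eqref{eq: weight} together with the conformal map of the arc, installing Airy parametrices at the two soft endpoints, and treating the $z^\nu$ branch point carried by the Hankel loop. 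A structural point, in contrast with Theorem~\ref{thm1}, is that for $\tau>1$ the origin lies in the \emph{exterior} of the support arc; consequently the $z^\nu$ singularity produces neither an $O(n)$ term nor a Barnes $G$ factor and enters only through the global parametrix. The leading term $n^2(\tau-\tfrac34-\tfrac12\log\tau)$ is then read off from the equilibrium energy, consistent with \eqref{free-energy}, while the analysis delivers the large-$n$ asymptotics (uniform for $\tau$ in compact subsets of $(1,\infty)$) of the norms $h_n$ and of the recurrence data $\pi_n(0),\tilde\pi_n(0)$ in \eqref{three-term1}--\eqref{three-term2}, with $R=I+O(1/n)$.

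Next I would use $\partial_\nu\log w=\log z$ to write a differential identity for $\partial_\nu\log D_{n,\nu}(t)$ in terms of the Riemann--Hilbert solution (equivalently, a weighted sum over the orthogonal polynomials via \eqref{TOP-h}). Feeding in the step-one asymptotics, I expect
\[
\partial_\nu\log D_{n,\nu}(t)=\nu\,\log\!\left(1-\tau^{-1}\right)+O\!\left(\tfrac1n\right),
\]
uniformly for $\nu$ in compact sets and $\tau$ in compact subsets of $(1,\infty)$. Integrating in $\nu$ from $0$ recovers the entire $\nu$-dependence of the constant term, namely $\tfrac{\nu^2}{2}\log(1-\tau^{-1})$, and reduces the theorem to the base case $D_{n,0}(t)$.

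Finally I would treat the $\nu$-independent part, where $D_{n,0}(n\tau)=\det\big(I_{i-j}(n\tau)\big)$ is the classical Gross--Witten--Wadia determinant in its strong-coupling phase. The subleading $-\tfrac1{12}\log n$ and the constant $\zeta'(-1)$ are precisely the Barnes $G$-function signature, via
\[
\log G(n+1)=\tfrac{n^2}{2}\log n-\tfrac34 n^2+\tfrac{n}{2}\log2\pi-\tfrac1{12}\log n+\zeta'(-1)+O(1/n).
\]
I would pin these down by integrating a differential identity (in $t$, or by continuing the $\nu$-identity) and fixing the integration constant at an anchor where the determinant is explicitly computable --- for instance the large-$t$ regime, where the uniform asymptotics of $I_{i-j}(t)$ factorize $\det\big(I_{i-j}(t)\big)$ and expose the $G(n+1)$ factor, hence $\zeta'(-1)$. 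The remaining $-\tfrac18\log(1-\tau^{-1})$ I expect to originate from the two Airy soft-edge parametrices.

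The main obstacle I anticipate is the rigorous determination of the additive constant $\zeta'(-1)$: the steepest-descent analysis and the $\nu$-identity control everything only up to a single $n$- and $\nu$-independent constant, and isolating it demands an independent exactly solvable anchor together with careful, \emph{uniform} control of the error $R=I+O(1/n)$ (and of the Airy-endpoint contributions) along the whole parameter path used in the integration. Uniformity as $\tau\to1^+$ is not required here, since the statement is restricted to compact subsets of $(1,\infty)$, but keeping the error uniform in $\nu$ across the integration contour and correctly matching its two endpoints is the delicate step.
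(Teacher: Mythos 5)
Your proposal is correct and, for the core of the argument, coincides with the paper's proof: the same strong-coupling steepest-descent analysis (arc-supported equilibrium measure with $\sin^2\frac{\theta_c}{2}=\frac1\tau$, Szeg\H{o}-function global parametrix, Airy parametrices at the two edges $\xi^{\pm1}$, exponentially small jumps on the branch cut since the origin lies off the support), followed by the $\nu$-differential identity giving exactly $\frac{d}{d\nu}\log D_{n,\nu}(t)=\nu\log(1-\tau^{-1})+O(1/n)$ as in \eqref{Asy:DnulogDn1}, and integration in $\nu$ reducing everything to the case $\nu=0$ as in \eqref{Integral:logDn}. Your structural remarks (no $O(n\nu)$ term, no Barnes $G(1-\nu)$ factor, the $-\frac18\log(1-\tau^{-1})$ belonging to the $\nu$-independent part) all match.

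The one genuine divergence is the base case $D_{n,0}(n\tau)$. The paper does not prove \eqref{eq:initial}; it imports it from the literature (\cite[Eq. (2.10)]{KO} and \cite{PS}), so the $-\frac1{12}\log n$, $\zeta'(-1)$ and $-\frac18\log(1-\tau^{-1})$ terms enter by citation. You instead propose a self-contained derivation: integrate a differential identity in $t$ and fix the constant at a large-$t$ anchor where $\det\bigl(I_{i-j}(t)\bigr)$ factorizes and exposes $G(n+1)$, hence $\zeta'(-1)$ via the Barnes expansion. This is a legitimate and well-precedented strategy, but it is substantially more work than what the paper does, and the burden you correctly identify is real: the RH error terms in the paper are only shown to be uniform for $\tau$ in \emph{compact} subsets of $(1,\infty)$, whereas your integration path runs to $t=\infty$, through the regime where the support arc shrinks ($\theta_c\sim 2\tau^{-1/2}$) and the two Airy disks approach each other, so the local analysis must be redone with $\tau$-dependent (shrinking) disks to keep the error integrable along the whole path. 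So: same proof for the reduction, a different (heavier, but self-contained) proof for the anchor that the paper resolves by citation; your version buys independence from \cite{KO,PS} at the cost of a nontrivial uniformity argument that the paper never needs.
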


\begin{rem} Similar to \eqref{free-energy}, we observe a
 third-order phase transition in the leading terms of the asymptotic expansions of \eqref{Integral:logDn1}-\eqref{Integral:logDn3}
 \begin{equation}\label{eq:phaseTrans}
\lim_{n\rightarrow\infty}\frac{1}{n^2}\left(\log Z_{n,v}(n\tau)- \frac{\tau^2}{4}\right)=\left\{ \begin{array}{ll}
                   0, & \hbox{$0<\tau<1$},\\
                     \tau-\frac{1}{2}\log\tau-\frac{3}{4}-\frac{\tau^2}{4}, & \hbox{$\tau>1$,}
                    \end{array}
\right.
\end{equation}
where
 \begin{equation}\label{eq:ph1}
\tau-\frac{1}{2}\log \tau-\frac{3}{4}-\frac{1}{4}\tau^2=-\frac{1}{6}(\tau-1)^3+O\left((\tau-1)^4\right), \quad \tau\to1.\end{equation}
In a separate paper \cite{CXZ}, we have shown that the phase transition can be described by  the Hamiltonian associated with the Hastings-McLeod solution of the second Painlev\'e equation \eqref{eq:PII} with the parameter dependent on $\nu$ in the regime where $n\to\infty$ and $\tau\to1$ in a way such that $n^{2/3}(\tau-1)\to s\in\mathbb{R}$.

\end{rem}
The rest of the paper is arranged as follows. In Section \ref{sec:RHforY}, we provide a RH problem $Y(z;  n, t)$ for the
orthogonal polynomials associated with \eqref{eq: weight} and derive the differential identity for the Toeplitz determinant generated by \eqref{eq: weight} with respect to the parameter $\nu$. In Sections \ref{sec:Asycase1} and \ref{sec:Asycase2}, we perform the Deift-Zhou nonlinear steepest descent analysis   \cite{DZ} of the RH problem
for $Y(z; n, n\tau)$ as $n\to\infty$ for the case $0<\tau<1$ and $\tau>1$ separately.  Then, by using the  differential identity and the results of  the nonlinear steepest descent analysis, we  prove Theorems \ref{thm1} and \ref{thm2}
in Sections \ref{sec:proof1} and \ref{sec:proof2}, respectively.  For the convenience of the reader, we collect the Airy and
Parabolic cylinder parametrices in the Appendix.

\section{Riemann-Hilbert problem for the orthogonal polynomials }\label{sec:RHforY}
\begin{rhp} \label{RHP: Y}
We look for a $2 \times 2$ matrix-valued function $Y(z; n, t)$ ($Y(z)$, for short) satisfying the properties.
\begin{itemize}
\item[\rm (1)] $Y(z)$ is analytic in $\mathbb{C} \setminus \Gamma$; where the contour $ \Gamma$ is shown in Figure \ref{hankelloop}.
\item[\rm (2)] $Y(z)$   has continuous boundary values $Y_{+}(z) $ and $Y_{-}(z) $ as $z$ approaches  the contour $ \Gamma$ from the positive and negative sides, respectively.   And they satisfy the jump condition
  \begin{equation}\label{eq:Yjump}
  Y_{+}(z)=Y_{-}(z)
  \begin{pmatrix}
  1 & \frac{w(z )}{z^n} \\
  0 & 1
  \end{pmatrix},\qquad z\in \Gamma,
  \end{equation}
where the weight  function is defined in \eqref{eq: weight}. 
\item[\rm (3)] As $z\to \infty$, we have
  \begin{equation}\label{eq:YInfinity}
  Y(z)=\left (I+\frac{Y_{-1}}{z}+O\left (\frac 1 {z^2}\right )\right)
 \begin{pmatrix}
 z^n & 0 \\
 0 & z^{-n}
 \end{pmatrix}.
 \end{equation}
\end{itemize}
\end{rhp}

It was discovered  by Fokas, Its and Kitaev  \cite{FIK} that the  orthogonal polynomials on the real line can be  represented as a solution of a Riemann-Hilbert problem.  Such a formulation has been extended to the orthogonal polynomials on the circle in \cite{BDJ, DIK}. 
Similarly,  
the unique solution to the above RH problem with jump on the Hankel loop can be constructed as follows \cite{CXZ}
\begin{equation}\label{eq:YSolution}
Y(z)=
\begin{pmatrix}
\pi_n(z) & \frac{1}{2\pi i } \int_{\Gamma}\frac{\pi_n(x) w(x)dx}{x^n(x-z)}\\
-h_{n-1}^{-1} \widetilde{\pi}^*_{n-1}(z)&  -\frac{h_{n-1}^{-1}}{2\pi i}  \int_{\Gamma}\frac{\widetilde{\pi}^*_{n-1}(x) w(x)dx}{x^n(x-z)}
\end{pmatrix},
\end{equation}
where $\widetilde{\pi}^*_{n-1}(z)=z^{n-1}\widetilde{\pi}_{n-1}(z^{-1})$ and $\pi_n$, $\widetilde{\pi}_{n-1}$  and $h_{n-1}$  are defined by \eqref{eq:MOrtho}.
The  orthogonal polynomials $\pi_n$ and $\widetilde{\pi}_{n-1}$ exist if $D_{n-1,\nu}(t)\neq0$ and $D_{n,\nu}(t)\neq0$ as seen from \eqref{eq: OPexpre} and \eqref{eq: tildeOPexpre}. For positive weight function on the unit circle, the Toeplitz
determinants  are strictly positive and the orthogonal polynomials exist. 
In our case, we will show later that the  RH problem for $Y(z)$ can be solved asymptotically for $n$ large enough and $t$ in the regions described in  Theorems \ref{thm1} and \ref{thm2}, which justifies the existence of the orthogonal polynomials on the Hankel loop for large enough $n$.


We now derive a differential identity  for the logarithmic derivative of the Toeplitz
determinant with respect to the parameter $\nu$ inspired  by the earlier works \cite{DIK2014, IK, K}.
\begin{pro} \label{Pro: DifferentialIdentity} Let $n\in \mathbb{N}$ be fixed and assume that $D_{k,\nu}(t)\neq0$, for $ k=n-1,n,n+1$, 
then we have the following differential identity with respect to the parameter $\nu$
\begin{align}\label{eq:dnuD}
\frac{d}{d\nu}\log D_{n,\nu}(t)=&n\frac{d}{d\nu}\log Y_{12}(0;n)-\frac{t}{2}\frac{d}{d\nu}\left((Y_{-1})_{11}+\frac{Y_{21}'(0;n+1)}{Y_{21}(0;n+1)}\right)\\ \nonumber
&-\frac{t}{2}\left(Y_{11}(0;n+1)\frac{d}{d\nu}Y_{22}(0; n)+(Y_{22}(0;n+1)\frac{d}{d\nu}Y_{11}(0; n)\right),
\end{align}
where $Y_{-1}$ is coefficient in the expansion \eqref{eq:YInfinity} and  $Y_{21}'$ denotes the derivative of $(2, 1)$-entry of $Y(z)$ with respect to $z$.
\end{pro}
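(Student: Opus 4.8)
The plan is to reduce the $\nu$-derivative of $\log D_{n,\nu}$ to an integral of $\log s$ against the diagonal Christoffel--Darboux kernel, put that kernel in closed (confluent) form, rewrite it through the RH solution $Y$, and finally evaluate the resulting contour integral by localizing its singular data at $s=0$ and $s=\infty$.

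First I would start from $\log D_{n,\nu}(t)=\sum_{k=0}^{n-1}\log h_k$ in \eqref{TOP-h}. Since $\pi_k$ and $\widetilde\pi_k$ are monic, their $\nu$-derivatives have degree $\le k-1$ and are annihilated by the bi-orthogonality \eqref{eq:MOrtho}; combined with $\partial_\nu\log w(s)=\log s$ from \eqref{eq: weight}, this gives $\frac{d}{d\nu}\log h_k=\int_\Gamma \log s\,p_k(s)\widetilde p_k(s^{-1})\,w(s)\frac{ds}{2\pi i s}$, and hence, after summation,
\begin{equation*}
\frac{d}{d\nu}\log D_{n,\nu}(t)=\int_\Gamma \log s\,K_n(s,s)\,w(s)\frac{ds}{2\pi i s},\qquad K_n(s,s)=\sum_{k=0}^{n-1}p_k(s)\widetilde p_k(s^{-1}).
\end{equation*}

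Next I would put $K_n(s,s)$ in closed form. Taking the confluent limit $a\to z=s$ in the Christoffel--Darboux formula \eqref{eq:CD} (both sides vanish at $a=s$, so one l'H\^opital step is needed) yields $K_n(s,s)=s^{1-n}\big(p_n'(s)\widetilde p_n^{*}(s)-p_n(s)(\widetilde p_n^{*})'(s)\big)$, a Wronskian of $p_n$ and $\widetilde p_n^{*}=s^n\widetilde p_n(s^{-1})$. The key point is that $p_n=\gamma_n\pi_n=\gamma_n Y_{11}(\cdot;n)$, while, applying \eqref{eq:YSolution} at size $n+1$, $\widetilde p_n^{*}=\gamma_n\widetilde\pi_n^{*}=-\gamma_n h_n\,Y_{21}(\cdot;n+1)$; this is precisely where the size-$(n+1)$ quantities enter. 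Substituting, and recognizing $w(s)/s^n$ as the off-diagonal jump in \eqref{eq:Yjump}, the integrand becomes a jump across $\Gamma$ of a bilinear expression in the entries of $Y(\cdot;n)$ and $Y(\cdot;n+1)$,
\begin{equation*}
K_n(s,s)\frac{w(s)}{s^n}=-s\,Y_{11}'(s;n)\,[Y_{22}(s;n+1)]+Y_{21}'(s;n+1)\,[Y_{12}(s;n)],
\end{equation*}
where $[\,\cdot\,]$ denotes the boundary-value jump, and $Y_{11}'(\cdot;n)=\pi_n'$, $Y_{21}'(\cdot;n+1)$ are entire.

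Then I would evaluate $\frac{d}{d\nu}\log D_{n,\nu}=\int_\Gamma \log s\{\cdots\}\frac{ds}{2\pi i}$. Integrating by parts in $s$ to move the derivative off $Y_{11}'$ and $Y_{21}'$, the derivative of $\log s$ produces $1/s$, while the derivative of the jump factor produces $\frac{w}{s^n}\big(\tfrac t2-\tfrac{t}{2s^2}+\tfrac\nu s-\tfrac ns\big)$ through $w'/w=\tfrac t2(1-s^{-2})+\nu s^{-1}$ --- this is the origin of the $\tfrac t2$ prefactors in \eqref{eq:dnuD}. Using analyticity of $Y(\cdot;n)$ and $Y(\cdot;n+1)$ in $\mathbb C\setminus\Gamma$, I would collapse the contour integrals onto their singular data: the expansion \eqref{eq:YInfinity} at $s=\infty$ contributes $(Y_{-1})_{11}$, and the values $Y_{11}(0;n),Y_{11}(0;n+1),Y_{22}(0;n),Y_{22}(0;n+1)$ together with $Y_{21}'(0;n+1)/Y_{21}(0;n+1)$ contribute the $s=0$ terms; the identity $Y_{12}(0;n)=h_n$, which follows from \eqref{eq:YSolution} and the norm relation in \eqref{eq:MOrtho}, turns the surviving $n\,\partial_\nu\log h_n$ into $n\,\partial_\nu\log Y_{12}(0;n)$. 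Collecting the pieces produces \eqref{eq:dnuD}.

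The hard part will be this last contour integral. Both $\log s$ and the factor $z^\nu$ inside $w$ are cut along $(-\infty,0]$, exactly where $\Gamma$ runs, so one must track the $2\pi i$ discontinuity of $\log s$ across the cut, show that the apparently non-local contributions it generates cancel between the two terms (or are removed by the integration by parts), and verify that the boundary terms at the two ends of the Hankel loop at $-\infty$ vanish because $w\sim e^{ts/2}\to0$ there. Careful bookkeeping of which orthogonal polynomial sits in $Y(\cdot;n)$ versus $Y(\cdot;n+1)$, and of the signs arising when the residues at $0$ and $\infty$ are extracted, is where constant and sign errors are most likely to appear; this is the step demanding the most care.
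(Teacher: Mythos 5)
Your setup is sound and in fact mirrors the paper's skeleton: the reduction of $\frac{d}{d\nu}\log D_{n,\nu}$ to an integral against the diagonal kernel, the confluent form of \eqref{eq:CD}, and the identification $p_n=\gamma_n Y_{11}(\cdot;n)$, $\widetilde p_n^{*}=-\gamma_n h_n Y_{21}(\cdot;n+1)$, $h_n=Y_{12}(0;n)$ are all correct (the paper's starting point \eqref{eq:DDn} puts $\partial_\nu$ on the kernel rather than on the weight, but the two are equivalent because $\int_\Gamma K_n(s,s)w(s)\frac{ds}{2\pi i s}=n$ is $\nu$-independent). The genuine gap is in your evaluation step, and it is not a technicality you can defer. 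First, the proposed ``collapse onto singular data at $0$ and $\infty$'' is obstructed in two ways: (a) the branch cut of $\log s$ lies along $(-\infty,0]$, \emph{inside} the region enclosed by the Hankel loop, so deforming the $+$-side integrals does not produce residues but non-local integrals $\int_{-\infty}^{0}(\cdots)\,ds$ of products of $Y$-entries, whose cancellation between your two terms you assume rather than prove; (b) the splitting of each jump $[\,\cdot\,]$ into separate $\pm$ boundary-value integrals is itself divergent, because the individual products decay too slowly along the unbounded arms of $\Gamma$ --- e.g.\ $sY_{11}'(s;n)Y_{22}(s;n+1)=\frac{n}{s}+O(s^{-2})$, and $\int \log s\cdot\frac{n}{s}\,ds$ diverges at $-\infty$ --- so a regularization (subtracting explicit asymptotics) is needed before any deformation is legitimate.

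Second, and more fundamentally, there is a structural mismatch with the identity you are trying to prove: residue data at $0$ and $\infty$ contains no $\nu$-derivatives, whereas the right-hand side of \eqref{eq:dnuD} does. In your weight-differentiation approach these derivatives can only arise by recognizing leftover $\log s$-integrals as full $\nu$-derivatives, and this works \emph{only} for $\int_\Gamma\log s\,p_n\widetilde p_n\,w\frac{ds}{2\pi is}=\frac{d}{d\nu}\log h_n$, where orthogonality kills the contributions of $\partial_\nu$ acting on the polynomials. For the integrals carrying the extra factors $s^{\pm1}$ produced by your integration by parts (the $\frac t2(1-s^{-2})+\frac{\nu}{s}$ terms), the polynomial-derivative corrections do \emph{not} vanish; they are precisely the quantities $\frac{d\gamma_n}{d\nu}$, $\frac{d(\gamma_n\widetilde a_{n,n-1})}{d\nu}$, $\frac{d(\gamma_n\widetilde\pi_n(0))}{d\nu}$ that the paper computes in \eqref{I11}--\eqref{eq:an} via the orthogonality \eqref{eq:Ortho} and the recurrences \eqref{three-term1}--\eqref{three-term2}, and they are what assemble into the terms $\frac t2\frac{d}{d\nu}\bigl((Y_{-1})_{11}+\frac{Y_{21}'(0;n+1)}{Y_{21}(0;n+1)}\bigr)$ and $\frac t2\bigl(Y_{11}(0;n+1)\frac{d}{d\nu}Y_{22}(0;n)+Y_{22}(0;n+1)\frac{d}{d\nu}Y_{11}(0;n)\bigr)$. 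In other words, the paper avoids $\log s$ and contour deformation altogether --- every integral is evaluated by orthogonality --- and the part of the argument your proposal replaces by an unproven cancellation-plus-residues scheme is exactly the heart of the proof. A smaller omission: the proposition only assumes $D_{k,\nu}(t)\neq0$ for $k=n-1,n,n+1$, while your derivation (like the paper's first step) uses all $h_k$, $k\le n$; the paper closes this by an analytic-continuation argument in $t$, which your proposal would also need.
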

 \begin{proof} We first prove the  differential identity under the condition $D_{k,\nu}(t)\neq0, k=0,1, \dots,n+1$. 
From \eqref{eq:MOrtho}, we see that the polynomials
$p_{n}=\gamma_{n}\pi_{n}$, $\widetilde{p}_{n}=\gamma_{n}\widetilde{\pi}_{n}$,  $\gamma_{n}=h_{n}^{-1/2}$ satisfy
 \begin{equation}\label{eq:Ortho}
\int_{\Gamma} p_n(s) \tilde{p}_m(s^{-1})\frac{ w(s) ds}{2\pi is}=\delta_{n,m},\quad m=0,1,\cdots,n.
\end{equation}
This, together with \eqref{TOP-h}, implies that
\begin{align}\label{eq:DDn}\nonumber
\frac{d}{d\nu}\log D_{n,\nu}&=-2\sum_{k=0}^{n-1} \frac{ 1 }{\gamma_k}\frac{d\gamma_{k}}{d\nu}\\&\nonumber  =-\sum_{k=0}^{n-1} \int_{\Gamma} \frac{d p_{k}(s)}{d\nu} \tilde{p}_k(s^{-1}) w(s) \frac{ds}{2\pi is}-
\sum_{k=0}^{n-1} \int_{\Gamma} p_{k}(s) \frac{d\tilde{p}_{k}(s^{-1})}{d\nu} w(s) \frac{ds}{2\pi is}\\
&=-\int_{\Gamma}\frac{d}{d\nu}\left(\sum_{k=0}^{n-1}p_{k}(s) \tilde{p}_{k}(s^{-1})\right)\frac{w(s) ds}{2\pi is}.
\end{align}
By the Christoffel-Darboux formula \eqref{eq:CD}, we have 
\begin{equation}\label{Ch-D}
\sum_{k=0}^{n-1}p_{k}(s) \tilde{p}_{k}(s^{-1})=-np_{n}(s) \tilde{p}_{n}(s^{-1})+s\left(\tilde{p}_{n}(s^{-1})\frac{d }{ds}p_{n}(s)-p_{n}(s) \frac{d}{ds}\tilde{p}_{n}(s^{-1})\right).
\end{equation}
Substituting  \eqref{Ch-D} into \eqref{eq:DDn}, and using the orthogonality \eqref{eq:MOrtho}, we obtain
\begin{equation}\label{eq:DnuDn}
\frac{d}{d\nu}\log D_{n,\nu}=-\int_{\Gamma} s\frac{d p_{n}(s)}{ds} \frac{d\tilde{p}_{n}(s^{-1})}{d\nu} \frac{ w(s)ds}{2\pi is}+\int_{\Gamma}\frac{d p_{n}(s)}{d\nu}\left\{ s\frac{d\tilde{p}_{n}(s^{-1})}{ds}\right\}\frac{ w(s) ds}{2\pi is}.
\end{equation}
To simplify the first integral on the right-hand side of the above equation, we derive by using integration by parts
\begin{align}\label{I1}\nonumber
\int_{\Gamma}& s\frac{d p_{n}(s)}{ds} \frac{d\tilde{p}_{n}(s^{-1})}{d\nu} \frac{ w(s)ds}{2\pi is}\\
=&-\int_{\Gamma}  p_{n}(s) \left\{s\frac{d^2}{dsd\nu} \tilde{p}_{n}(s^{-1}) \right\}\frac{ w(s) ds}{2\pi is}-\nu\int_{\Gamma}  p_{n}(s) \frac{d\tilde{p}_{n}(s^{-1})}{d\nu} \frac{ w(s)ds}{2\pi i s}\\ \nonumber
&+ \frac{t}{2}\int_{\Gamma}p_{n}(s) \left\{s^{-1}\frac{d\tilde{p}_{n}(s^{-1})}{d\nu}\right\}\frac{w(s) ds}{2\pi is}
- \frac{t}{2}\int_{\Gamma}sp_{n}(s) \frac{d\tilde{p}_{n}(s^{-1})}{d\nu} \frac{w(s) ds}{2\pi i s}.\\ \nonumber
\end{align}
We calculate the integrals on the right-hand side of the above equation term by term. Denote
$\tilde{\pi}_{n}(s^{-1})=s^{-n}+\tilde{a}_{n,n-1}s^{-n+1}\cdots$.  We obtain from $\widetilde{p}_{n}=\gamma_{n}\widetilde{\pi}_{n}$ and the  orthogonality  \eqref{eq:Ortho} that
\begin{align}\label{I11}
\int_{\Gamma} p_{n}(s)\left\{ s\frac{d^2}{dsd\nu}\tilde{p}_{n}(s^{-1})\right\}\frac{ w(s) ds}{2\pi is}=-\frac{n}{\gamma_{n}} \frac{d\gamma_{n}}{d\nu},
\end{align}
\begin{align}\label{I12}
\nu\int_{\Gamma}  p_{n}(s) \frac{d\tilde{p}_{n}(s^{-1})}{d\nu}\frac{w(s)ds}{2\pi i s}=\frac{\nu}{\gamma_{n}} \frac{d\gamma_{n}}{d\nu},
\end{align}
and
\begin{align}\label{I13}\nonumber
&\int_{\Gamma}p_{n}(s) \left\{s^{-1}\frac{d\tilde{p}_{n}(s^{-1})}{d\nu}\right\}\frac{w(s) ds}{2\pi is}
=\int_{\Gamma}p_{n}(s) \left\{\frac{d}{d\nu}\gamma_{n} s^{-(n+1)}+\frac{d}{d\nu}(\gamma_{n}\tilde{a}_{n,n-1}) s^{-n}\right\}\frac{w(s) ds}{2\pi is}
\\ \nonumber
&=\int_{\Gamma}p_{n}(s)\left(\frac{d\gamma_{n}}{d\nu}\tilde{\pi}_{n+1}(s^{-1})+\left(\frac{d(\gamma_{n}\tilde{a}_{n,n-1})}{d\nu}-
\tilde{a}_{n+1,n}\frac{d\gamma_{n}}{d\nu} \right)\tilde{\pi}_{n}(s^{-1})\right)\frac{w(s) ds}{2\pi is}\\
&=\frac{1}{\gamma_{n}}\left(\frac{d(\gamma_{n}\tilde{a}_{n,n-1})}{d\nu}-
\tilde{a}_{n+1,n}\frac{d\gamma_{n}}{d\nu}\right ).
\end{align}
To simplify  the last integral in \eqref{I1}, we obtain by using  the orthogonality \eqref{eq:Ortho} and the recurrence relation \eqref{three-term1} 
\begin{align}\label{I14}\nonumber
\int_{\Gamma}sp_{n}(s) \frac{d\tilde{p}_{n}(s^{-1})}{d\nu} \frac{w(s) ds}{2\pi is} &=\int_{\Gamma}(\gamma_{n}\pi_{n+1}(s)-\pi_{n+1}(0)s^{n}\tilde{p}_{n}(s^{-1}))\frac{d\tilde{p}_{n}(s^{-1})}{d\nu} \frac{w(s) ds}{2\pi i s}\\ 
&=-\frac{\pi_{n+1}(0)}{{\gamma_{n}}} \frac{d(\gamma_{n}\tilde{\pi}_{n}(0))}{d\nu}.
\end{align}
Substituting the above expressions into \eqref{I1}, we obtain
\begin{align}\label{I1234}\nonumber
\int_{\Gamma}& s\frac{d p_{n}(s)}{ds} \frac{d\tilde{p}_{n}(s^{-1})}{d\nu} \frac{ w(s)ds}{2\pi is}\\
&=\frac{n-\nu}{\gamma_{n}} \frac{d\gamma_{n}}{d\nu} +\frac{t\pi_{n+1}(0)}{2{\gamma_{n}}} \frac{d(\gamma_{n}\tilde{\pi}_{n}(0))}{d\nu}
+\frac{t}{2\gamma_{n}}
\left(\frac{d(\gamma_{n}\tilde{a}_{n,n-1})}{d\nu}-
\tilde{a}_{n+1,n}\frac{d\gamma_{n}}{d\nu}\right ).
\end{align}
A similar calculation yields
\begin{align}\label{I2}\nonumber
\int_{\Gamma}&\frac{d p_{n}(s)}{d\nu}\left\{s\frac{d\tilde{p}_{n}(s^{-1})}{ds}\right\} \frac{ w(s)ds}{2\pi is}\\
=&\frac {-n-\nu}{\gamma_{n}} \frac{d\gamma_{n}}{d\nu} -\frac{t\tilde{\pi}_{n+1}(0)}{2\gamma_{n}} \frac{d(\gamma_{n}\pi_{n}(0))}{d\nu}
-\frac{t}{2\gamma_{n}}
\left(\frac{d(\gamma_{n}a_{n,n-1})}{d\nu}-a_{n+1,n}\frac{d\gamma_{n}}{d\nu}\right),
\end{align}
where $a_{n,n-1}$  and $\tilde{a}_{n,n-1}$ denote the sub-leading coefficients of $\pi_n(z)$ and $\tilde{\pi}_{n}(z)$, respectively.
From  the recurrence relations \eqref{three-term1} and \eqref{three-term2}, we have
\begin{equation}\label{eq:an}
a_{n+1,n}-a_{n,n-1}-\pi_{n+1}(0)\tilde{\pi}_{n}(0)=0, \quad \tilde{a}_{n+1,n}-\tilde{a}_{n,n-1}-\tilde{\pi}_{n+1}(0)\pi_{n}(0)=0.
\end{equation}

Thus, we obtain \eqref{eq:dnuD} under the condition $D_{k,\nu}(t)\neq0, k=0,1, \dots,n+1$, by substituting \eqref{I1234}, \eqref{I2} and \eqref{eq:an} into \eqref{eq:DnuDn} and using  \eqref{eq:YSolution}.  
Substituting into \eqref{TopBes} the expression of the modified Bessel function given after \eqref{eq: Bes}, we have  the asymptotic behavior of $D_{k,\nu}(t)$ as $t\to 0$
$$D_{k,\nu}(t)\sim c_{k,\nu} t^{-k\nu},$$
where $c_{k,\nu}\neq 0$, except possibly a discrete set of $\nu$. Therefore,  if $c_{k,\nu}\neq 0$ the zeros of $D_{k,\nu}(t)$ are isolated since $D_{k,\nu}(t)$  is an analytic function in $t$.  While the existence of $Y(z; n)$  and $Y(z; n+1)$ is guaranteed by the weaker condition  $D_{k,\nu}(t)\neq0, k=n-1, n,n+1$.  Therefore,
we can obtain \eqref{eq:dnuD} under this weaker condition by an argument of analytic continuation. 
This completes the proof of the proposition.
\end{proof}

\section{Asymptotics of $Y(z;n\tau)$: case  $0<\tau<1$}\label{sec:Asycase1}
In this section, we will perform the Deift-Zhou nonlinear steepest descent analysis  \cite{DZ} of
the RH problem for $Y(z;n\tau)$
for $\tau$ in any compact subset of $(0,1)$.

First, we   modify the contour in the RH problem by introducing $\widehat{Y}$ as follows:
\begin{equation}\label{eq:Yhat}
  \widehat{Y}(z)=\left\{ \begin{array}{ll}
                      Y(z), & \quad z\in  \Omega_{0}\cup \Omega_{2}, \\ 
                     Y(z) \begin{pmatrix}1&-\frac{w(z)}{z^{n}}\\0 &1\end{pmatrix}, & \quad z\in\Omega_{1};
                    \end{array} 
\right.
 \end{equation}
see Figure \ref{hankelcontour} for the domains. Then we obtain the following RH problem for $\widehat{Y}$.
\begin{figure}[h]
  \centering
  \includegraphics[width=8cm,height=4.1cm]{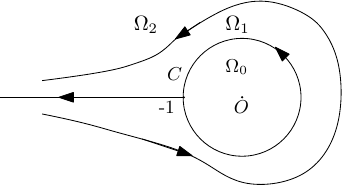}\\
  \caption{The regions for  $\Omega_{k}$, $k=0,1,2$.}\label{hankelcontour}
\end{figure}

\begin{rhp}\label{RHP: Yhat}

\item[\rm (1)] $\widehat{Y}(z)$ is analytic in $\mathbb{C}\setminus \Sigma_{\widehat{Y}} $, where the contours   $\Sigma_{\widehat{Y}}= (-\infty,-1)\cup C$
are depicted in Figure \ref{Yhatcontour}, of which $C$ is the unit circle centered at the origin.
\item[\rm (2)] $\widehat{Y}(z)$  has continuous boundary values on $\Sigma_{\widehat{Y}}$, which satisfy $\widehat{Y}_{+}(z)=\widehat{Y}_{-}(z)J_{\widehat{Y}}(z)$ with
  \begin{equation}\label{eq:Yhatjump}
 J_{\widehat{Y}}(z)=\left\{
\begin{aligned}
&\begin{pmatrix}1 & \frac{w(z)}{z^{n}}\\ 0 & 1 \end{pmatrix},\quad &z&\in C,\\
&\begin{pmatrix}1 & \frac{w_{+}(z)-w_{-}(z)}{z^{n}}\\ 0 & 1 \end{pmatrix},\quad &z&\in (-\infty,-1), 
\end{aligned}
\right.
  \end{equation}
   where $w_{+}(z)-w_{-}(z)=|z|^{\nu}e^{\frac{t}{2}(z+\frac{1}{z})}(e^{\pi i \nu}-e^{-\pi i \nu})$.
  \item[\rm (3)] As $z\to \infty$, we have
  \begin{equation}\label{eq:UInfinity}
  \widehat{Y}(z)=\left (I+\frac{\widehat{Y}_{-1}}{z}+O\left (\frac 1 {z^2}\right )\right )z^{n\sigma_{3}},
 \end{equation}
where $\sigma_{3}$ is one of the Pauli matrices
\begin{equation}\label{Pauli}
\sigma_{1}=\begin{pmatrix}0 & 1\\ 1 & 0\end{pmatrix},~\sigma_{2}=\begin{pmatrix}0 & -i\\ i & 0\end{pmatrix},
~\sigma_{3}=\begin{pmatrix}1 & 0\\ 0 & -1\end{pmatrix}.
\end{equation}
\end{rhp}
\begin{figure}[h]
  \centering
  \includegraphics[width=8cm,height=2.9cm]{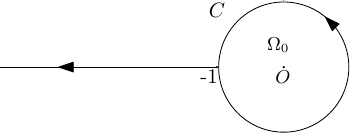}\\
  \caption{Contours of RH problem for $\widehat{Y}$}\label{Yhatcontour}
\end{figure}
\subsection{Normalization: $\widehat{Y} \to T $}
Let $t=n\tau$, $\tau\in(0,1)$, we introduce the first transformation $\widehat{Y}\to T$ to normalize the large-$z$ behavior of $\widehat{Y}(z)$,
\begin{equation}\label{eq:YhatT}
  T(z)=\left\{ \begin{array}{ll}
                      \widehat{Y}(z)e^{\frac{n\tau}{2z}\sigma_3}z^{-n\sigma_3}, & \hbox{$|z|>1$},\\
                     \widehat{Y}(z)e^{\frac{n\tau}{2}z\sigma_3}, & \hbox{$|z|<1$.}
                    \end{array}
\right.
 \end{equation}
 Then, we arrive at the following RH problem for $T(z)$.
\begin{rhp} \label{RHP: T}
The function $T(z)$ defined in \eqref{eq:YhatT} satisfies the following RH problem.
\begin{itemize}
\item[\rm (1)] $T(z)$ is analytic in $\mathbb{C} \setminus \Sigma_{\widehat{Y}}$.
\item[\rm (2)] $T(z)$  has continuous boundary values on $ \Sigma_{\widehat{Y}}$ and they are related by 
  \begin{equation}\label{eq:TJump}
  T_{+}(z)=T_{-}(z)\left\{
  \begin{aligned}
  &\begin{pmatrix}e^{n\phi(z)} & z^{\nu} \\0 & e^{-n\phi(z)}\end{pmatrix},\quad &z&\in C,\\
  &\begin{pmatrix}1 & |z|^{\nu}e^{n\phi(z)}(e^{\pi i \nu}-e^{-\pi i \nu})\\ 0 & 1 \end{pmatrix},\quad &z&\in (-\infty,-1),
  \end{aligned} 
\right.
  \end{equation}
  where
  \begin{equation}\label{eq:phi}
  \phi(z)=\frac{\tau}{2}\left(z-z^{-1}\right)+\log z,
 \end{equation}
and the branch of $\log z$ is chosen such that $\arg z\in(-\pi,\pi)$. For $z\in(-\infty,0)$, we understand $e^{n\phi(z)}$ as 
$e^{n\phi(z)}=e^{n\phi_+(z)}=e^{n\phi_-(z)}$.
\item[\rm (3)] As $z\to \infty$, we have
  \begin{equation}\label{eq:TInfinity}
  T(z)=I+O\left (\frac 1 {z}\right ).
 \end{equation}
 \end{itemize}
\end{rhp}
It is easy to see that $\phi(z)$ has two stationary points on the real axis
\begin{equation}\label{eq:zpm}
z^{\pm}=\frac{-1\pm\sqrt{1-\tau^2}}{\tau},
\end{equation}
where $0<\tau<1$.  For later use, we derive the following properties for the function   $\phi(z)$.
\begin{pro}
The $\phi$-function defined in \eqref{eq:phi} possesses the following properties
\begin{equation}\label{eq:phiSgn}
\left\{ \begin{array}{ll }
\Re  \phi(z)=0,  &|z|=1,  \\ 
\Re  (\phi(z)-\phi(z^{-}))<0,  &|z|=-z^{+},  \\
\Re (\phi(z)-\phi(z^{-}))>0,  &|z|=-z^{-},~ z\not= z^{-},  \\
\Re  (\phi(z)-\phi(z^{-}))<0, &z\in (-\infty,z^{+}).
  \end{array}
\right.
\end{equation}
\end{pro}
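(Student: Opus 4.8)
The plan is to reduce all four inequalities in \eqref{eq:phiSgn} to elementary one-variable estimates by exploiting the structure of $\Re\phi$ on circles and on the negative real axis. The starting point is that for $z=re^{i\theta}$ with the principal branch $\log z=\log r+i\theta$ one has
\begin{equation*}
\Re\phi(re^{i\theta})=\frac{\tau}{2}\left(r-\frac{1}{r}\right)\cos\theta+\log r ,
\end{equation*}
since $\Re(z-z^{-1})=(r-r^{-1})\cos\theta$ and $\Re\log z=\log r$. The first property is then immediate: on $|z|=1$ the coefficient $r-r^{-1}$ vanishes and $\log r=0$, so $\Re\phi\equiv 0$, the point $z=-1$ on the branch cut being covered by continuity.

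For the two remaining circular estimates I would first record two facts about the stationary points in \eqref{eq:zpm}. From $z^{+}z^{-}=1$ we get $|z^{+}|=1/|z^{-}|$, so $-z^{+}\in(0,1)$ and $-z^{-}\in(1,\infty)$ are reciprocal radii straddling the unit circle. A direct substitution, using $(z^{\pm})^{-1}=z^{\mp}$, gives the closed forms
\begin{equation*}
\Re\phi(z^{-})=\log\frac{1+\sqrt{1-\tau^{2}}}{\tau}-\sqrt{1-\tau^{2}}=:f(\tau),\qquad \Re\phi(z^{+})=-f(\tau).
\end{equation*}
A short calculus lemma then shows $f(\tau)>0$ on $(0,1)$: one checks $f(1^{-})=0$ and, writing $s=\sqrt{1-\tau^{2}}$, computes $f'(\tau)=\frac{\tau}{1+s}-\frac1\tau=-\frac{s}{\tau}<0$, so $f$ decreases to $0$ at $\tau=1$ and is positive on $(0,1)$.

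With these in hand the circular inequalities become a monotonicity statement in $\cos\theta$. On $|z|=-z^{+}<1$ the coefficient $\frac{\tau}{2}(r-r^{-1})$ is negative, so $\Re\phi$ is maximized where $\cos\theta=-1$, i.e.\ at $z=z^{+}$, giving $\Re\phi(z)\le-f(\tau)$ and hence $\Re(\phi(z)-\phi(z^{-}))\le-2f(\tau)<0$, which is the second property. On $|z|=-z^{-}>1$ the same coefficient is positive, so $\Re\phi$ is minimized at $\cos\theta=-1$, i.e.\ at $z=z^{-}$, and only there; thus $\Re(\phi(z)-\phi(z^{-}))>0$ for every $z\ne z^{-}$ on that circle, which is the third property.

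Finally, for the real segment I would study $g(x):=\Re\phi(x)=\frac{\tau}{2}(x-x^{-1})+\log(-x)$ for $x<0$. Since $\phi'(z)=\frac{\tau}{2}(1+z^{-2})+z^{-1}=\frac{\tau(z-z^{+})(z-z^{-})}{2z^{2}}$, the function $g$ increases on $(-\infty,z^{-})$ and decreases on $(z^{-},z^{+})$, with $g(x)\to-\infty$ as $x\to-\infty$; hence $\sup_{(-\infty,z^{+})}g=g(z^{-})=f(\tau)=\Re\phi(z^{-})$, attained only at $x=z^{-}$. This yields $\Re(\phi(x)-\phi(z^{-}))<0$ throughout $(-\infty,z^{+})$ away from the stationary point $z^{-}$, where the difference vanishes exactly as in the third property. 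The only genuinely non-routine ingredient is the positivity of $f(\tau)$; everything else is sign bookkeeping of a single cosine or a single real derivative, and the main point to watch is the principal branch of the logarithm on the negative axis, which affects only $\Im\phi$ and so leaves all four real-part estimates intact.
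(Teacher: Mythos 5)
Your proof is correct, and its skeleton coincides with the paper's: both arguments rest on the observation that on a circle $|z|=r$ one has $\Re\phi(re^{i\theta})=\frac{\tau}{2}\left(r-r^{-1}\right)\cos\theta+\log r$, so the restriction of $\Re\phi$ to each circle is linear in $\cos\theta$ and extremized at $\theta=\pi$, and both obtain the last property from the monotonicity of $\Re\phi$ on the negative axis coming from the factorization $\phi'(z)=\frac{\tau(z-z^{+})(z-z^{-})}{2z^{2}}$. The one genuine difference is how the pivotal positivity $\Re\phi(z^{-})>0>\Re\phi(z^{+})$ is established. You compute the closed form $f(\tau)=\log\frac{1+\sqrt{1-\tau^{2}}}{\tau}-\sqrt{1-\tau^{2}}$ and verify $f(1^{-})=0$ and $f'(\tau)=-\sqrt{1-\tau^{2}}/\tau<0$; the paper instead gets the same fact with no computation, from what it has already proved: $\Re\phi(-1)=0$ because $-1$ lies on the unit circle, and since $z^{-}<-1<z^{+}<0$ while $\Re\phi$ increases on $(-\infty,z^{-})$ and decreases on $(z^{-},z^{+})$, one reads off $\Re\phi(z^{-})>\Re\phi(-1)=0>\Re\phi(z^{+})$. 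The paper's route is slicker; yours is self-contained and has the small bonus of exhibiting $\Re\phi(z^{-})$ as the explicit quantity $\log\frac{1+\sqrt{1-\tau^{2}}}{\tau}-\sqrt{1-\tau^{2}}$, which is precisely the coefficient of $n\nu$ appearing later in Theorem \ref{thm1}. A further point in your favor: you note explicitly that $\Re(\phi(z)-\phi(z^{-}))$ vanishes at $z=z^{-}\in(-\infty,z^{+})$, so the last inequality in \eqref{eq:phiSgn} must be read as holding away from $z^{-}$; the paper's proof passes over this silently.
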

\begin{proof}
The first equality in \eqref{eq:phiSgn} follows directly from \eqref{eq:phi}. From \eqref{eq:phi} and \eqref{eq:zpm}, we see that
 $\Re \phi(z)$ has the maximum  at $z=z^{-}$ for $z\in(-\infty, z^{+})$. Therefore, we have the last inequality in \eqref{eq:phiSgn}.
 Moreover, we have $\Re \phi(z^-)>0$ and $\Re \phi(z^+)<0$, since $\Re \phi(-1)=0$ and $z^-<-1<z^+<0$. The other inequalities in \eqref{eq:phiSgn} are verified in a straightforward manner. For example, for $\rho =-z^+=-1/z^-$, substituting   $z=\rho e^{i\vartheta}$ for $\vartheta\in (-\pi, \pi]$ into \eqref{eq:phi} yields
 \begin{equation*}
 \Re  (\phi(z)-\phi(z^{-}))=\frac\tau 2 \left(\frac 1 \rho-\rho\right)(1-\cos\vartheta)+2\log\rho\leq  \tau   \left(\frac 1 \rho-\rho\right) +2\log\rho=2\Re \phi(z^+)<0.
 \end{equation*}This gives  the second inequality. While for $z=\frac 1 \rho e^{i\vartheta}$ with $\vartheta\in (-\pi, \pi]$, it holds
 \begin{equation*}
 \Re  (\phi(z)-\phi(z^{-}))=\frac\tau 2 \left(\frac 1 \rho-\rho\right)(1+\cos\vartheta)>0~~\mbox{for}~\vartheta\not=\pi,
 \end{equation*}
 and the third inequality in  \eqref{eq:phiSgn} follows accordingly.  This completes the proof of the proposition.
\end{proof}
\subsection{Deformation: $T\to S$}
It is seen from  \eqref{eq:TJump}  that the diagonal entries of the jump matrix for $T$  are highly oscillating as $n\to\infty$.
 To transform the oscillating entries  to exponential decay ones on certain contours, we introduce
 the second transformation $T\to S$.
The transformation is based on the following factorization of jump matrix
\begin{equation}\label{eq:Factorization}
  \begin{pmatrix}
  e^{n\phi(z)} & z^{\nu} \\
  0 & e^{-n\phi(z)}
  \end{pmatrix}=   \begin{pmatrix}
  1 &0 \\
  z^{-\nu} e^{-n\phi(z)} & 1
  \end{pmatrix}  \begin{pmatrix}
  0& z^{\nu} \\
  -z^{-\nu} &0
  \end{pmatrix} \begin{pmatrix}
  1 &0 \\
  z^{-\nu} e^{n\phi(z)} & 1
  \end{pmatrix}.
  \end{equation}
  We define the transformation
  \begin{equation}\label{eq:TS}
  S(z)=\left\{ \begin{array}{lll}
  e^{-\frac{n}{2}\phi(z^{-})\sigma_{3}}T(z) \begin{pmatrix}1 & 0 \\z^{-\nu} e^{-n\phi(z)} & 1\end{pmatrix}e^{\frac{n}{2}\phi(z^{-})\sigma_{3}} , & \hbox{for $z\in  \Omega_E$,}\\[.5cm]
  e^{-\frac{n}{2}\phi(z^{-})\sigma_{3}}T(z)\begin{pmatrix}1 & 0 \\- z^{-\nu} e^{n\phi(z)} &1\end{pmatrix} e^{-\frac{n}{2}\phi(z^{-})\sigma_{3}},   &   \hbox{for $z\in  \Omega_I$,}\\[.4cm]
  e^{-\frac{n}{2}\phi(z^{-})\sigma_{3}}T(z)e^{-\frac{n}{2}\phi(z^{-})\sigma_{3}}, &   \hbox{for $z\in  \Omega_0$,}\\[.3cm]
  e^{-\frac{n}{2}\phi(z^{-})\sigma_{3}}T(z)e^{\frac{n}{2}\phi(z^{-})\sigma_{3}}, & \hbox {for $z\in \mathbb{C}\setminus(\overline{\Omega}_E\cup\overline{\Omega}_I\cup\overline{\Omega}_0\cup(-\infty,z^-))$,} \\
  \end{array}
\right.
 \end{equation}
 with $z^{-}$ given in \eqref{eq:zpm} and the regions shown in Figure \ref{Scontour}. 
 As mentioned after \eqref{eq:phi},  we understand $e^{n\phi(z)}$ as 
$e^{n\phi(z)}=e^{n\phi_+(z)}=e^{n\phi_-(z)}$  for $z\in(-\infty,0)$.
Then, $S(z) $ solves
 the following RH problem.
 \begin{figure}[ht]
  \centering
  \includegraphics[width=8cm,height=5cm]{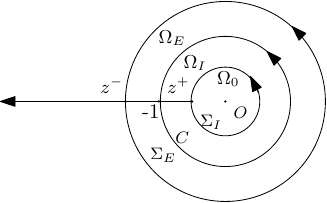}\\
  \caption{Contours and regions of the RH problem for $S(z)$}\label{Scontour}
\end{figure}
\begin{rhp} \label{RHP: S}
The function $S(z)$ defined in \eqref{eq:TS} satisfies the following properties.
\begin{itemize}
\item[\rm (1)] $S(z)$ is analytic in $\mathbb{C} \setminus \Sigma_S$, where $\Sigma_S=\Sigma_E\cup C \cup \Sigma_I\cup(-\infty,z_{+})$ with $\Sigma_E=\{z\in \mathbb{C}:|z|=|z^{-}|\}$ and $\Sigma_I=\{z\in \mathbb{C}:|z|=|z^{+}|\}$  being the boundary of the lens-shaped regions  $\Omega_E$ and $\Omega_I$ as indicated in Figure \ref{Scontour}.
\item[\rm (2)] $S(z)$ has continuous boundary values on $\Sigma_S$ and they are related by
  \begin{equation}\label{eq:SJump}
  S_{+}(z)=S_{-}(z)
  \left\{ \begin{array}{lll}
                    \begin{pmatrix}
  1 & 0 \\
 z^{-\nu} e^{-n(\phi(z)-\phi(z^{-}))} & 1
  \end{pmatrix} , & \hbox{for $z\in  \Sigma_E$,}\\[.5cm]
   \begin{pmatrix}
  0& z^{\nu} \\
  -z^{-\nu} &0
  \end{pmatrix} , & \hbox{for $z\in C$,} \\[.5cm]
 \begin{pmatrix}
  1 & 0 \\
   z^{-\nu} e^{n(\phi(z)-\phi(z^{-}))} &1
  \end{pmatrix}  ,   &   \hbox{for $z\in  \Sigma_I$,}\\[.5cm]
  \begin{pmatrix}1 & |z|^{\nu}e^{n(\phi(z)-\phi(z^{-}))}(e^{\pi i \nu}-e^{-\pi i \nu})\\ 0 & 1 \end{pmatrix},& \hbox{for $z\in  (-\infty,z^{-})$},\\
  \begin{pmatrix}e^{2\pi i\nu} & |z|^{\nu}e^{n(\phi(z)-\phi(z^{-}))}(e^{\pi i \nu}-e^{-\pi i \nu})\\ 0 & e^{-2\pi i\nu} \end{pmatrix},& \hbox{for $z\in  (z^{-},-1)$},\\
  \begin{pmatrix}1 & 0\\ |z|^{-\nu}e^{n(\phi(z)-\phi(z^{-}))}(e^{-\pi i \nu}-e^{\pi i \nu}) & 1 \end{pmatrix},& \hbox{for $z\in  (-1,z^{+})$},\\
  \end{array}
\right.
  \end{equation}
  with $\phi(z)$ defined in \eqref{eq:phi}.
\item[\rm (3)] As $z\to \infty$, we have
  \begin{equation}\label{eq:SInfinity}
  S(z)=I+O\left (\frac 1 {z}\right ).
 \end{equation}
\end{itemize}
\end{rhp}

\subsection{Global parametrix: $N$}
By the properties of $\phi(z)$ stated in \eqref{eq:phiSgn}, we see from  \eqref{eq:SJump} that for $z$ on $\Sigma_{I}\cup \Sigma_{E}$ and bounded away from the interval $(z^{-},-1)$,
the jump matrix for $S(z)$ equals to the identity matrix up to an exponentially small term. Neglecting the exponentially small terms, we arrive at an approximating RH problem for the global parametrix $N(z)$.
\begin{rhp} \label{RHP: N}
We look for a $2 \times 2$ matrix-valued function $N(z)$ satisfying the following properties.
\begin{itemize}
\item[\rm (1)] $N(z)$ is analytic in $\mathbb{C} \setminus \left(C\cup(z^{-},-1)\right)$, where $C$ is the unit circle oriented counterclockwise.
\item[\rm (2)] $N(z)$ has continuous boundary values on  $ C\cup(z^{-},-1)$ and they are related by 
  \begin{equation}\label{eq:N1jump}
  N_{+}(z)=N_{-}(z)\left\{ \begin{array}{ll}
 \begin{pmatrix}0& z^{\nu} \\-z^{-\nu} &0\end{pmatrix} ,& z\in C,\\[.4cm]
 \begin{pmatrix}e^{2\pi i\nu}& 0 \\0 &e^{-2\pi i\nu}\end{pmatrix} ,& z\in (z^{-},-1).\\
   \end{array}
\right.
  \end{equation}
  \item[\rm (3)] As $z\to \infty$, we have
  \begin{equation}\label{eq:NInfinity}
  N(z)=I+O\left (\frac 1 {z}\right ).
 \end{equation}
\end{itemize}
\end{rhp}
The solution to the  RH problem for $N(z)$ can be constructed by using elementary functions as follows:
\begin{equation}\label{eq:NSolution}
  N(z)=\left\{ \begin{array}{ll}
                      \left(\frac{z-z^{-}}{z}\right)^{\nu\sigma_3}, & \hbox{$|z|>1$,} \\
                     (z-z^{-})^{\nu \sigma_3}\begin{pmatrix}
 0 & 1 \\
  -1 & 0
  \end{pmatrix}, & \hbox{$|z|<1$.}
                    \end{array}
\right.
 \end{equation}
Here, the branches of the power functions are chosen such that $\arg z \in(-\pi, \pi)$ and  $\arg(z-z_{-})\in(-\pi, \pi)$.
\subsection{Local parametrix: $P$}
In this subsection, we   proceed    to construct a local parametrix $P(z)$ satisfying the same jump condition as $S(z)$ on the jump contours $\Sigma_S$ (see Figure \ref{Pcontour}) in the neighborhood $U(z^{-},\delta)=\{z\in \mathbb{C}:|z-z^{-}|<\delta\}$ of the saddle points $z^{-}$, and matching with $N(z)$ on the boundary $\partial U(z^{-},\delta)$ with $0<\delta<-1-z^{-}$. Therefore, we formulate  the following RH problem for $P(z)$.
\begin{rhp}\label{RHP:P}
\item[\rm (1)] $P(z)$ is analytic in $U(z^{-},\delta)\setminus \Sigma_S$.
\item[\rm (2)] $P(z)$  satisfies the  jump condition
\begin{equation}\label{eq:tildePJump}
   P_{+}(z)= P_{-}(z)
  \left\{ \begin{array}{lll}
                    \begin{pmatrix}
  1 & 0 \\
 z^{-\nu} e^{-n(\phi(z)-\phi(z_{-}))} & 1
  \end{pmatrix} , & \hbox{for $z\in U(z^{-},\delta)\cap \Sigma_E$,}\\[.5cm]
  \begin{pmatrix}1 & |z|^{\nu}e^{n(\phi(z)-\phi(z^{-}))}(e^{\pi i \nu}-e^{-\pi i \nu})\\ 0 & 1 \end{pmatrix},& \hbox{for $z\in  U(z^{-},\delta)\cap(-\infty,z^{-})$},\\[.5cm]
  \begin{pmatrix}e^{2\pi i \nu} & |z|^{\nu}e^{n(\phi(z)-\phi(z^{-}))}(e^{\pi i \nu}-e^{-\pi i \nu})\\ 0 & e^{-2\pi i \nu} \end{pmatrix},& \hbox{for $z\in  U(z^{-},\delta)\cap(z^{-},-1)$}.\\
  \end{array}
\right.
  \end{equation}
   \item[\rm (3)] On the boundary of $U(z^{-},\delta)$, $P(z)$ satisfies the matching condition
  \begin{equation}\label{eq:Matching}
P(z)N(z)^{-1}= n^{-\frac{ \nu}{2}\sigma_3}\left(I+O(n^{-1/2})\right)n^{\frac{ \nu}{2}\sigma_3}.
 \end{equation}
\end{rhp}
\begin{figure}[ht]
  \centering
  \includegraphics[width=12.8cm,height=5cm]{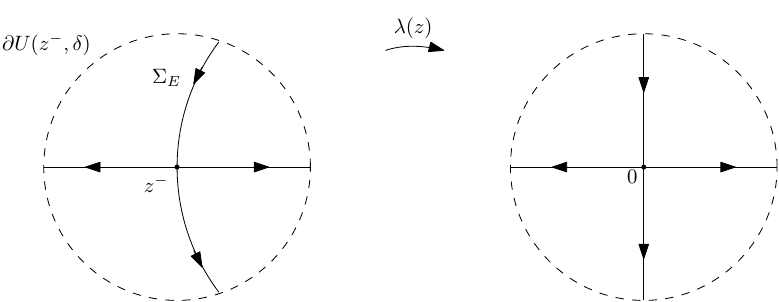}\\
  \caption{The contours and regions of the RH problem for $P(z)$ and their image under the map $\lambda(z)$ }\label{Pcontour}
\end{figure}
To construct the solution of the RH problem for $P(z)$, we introduce the following  conformal mapping
 \begin{equation}\label{eq:lambda}
\lambda(z)=2\left(-\frac{\phi(z)-\phi(z^{-})}{2}\right)^{1/2}\sim (-\phi''(z^{-}))^{\frac{1}{2}}(z-z^{-}), \quad z\to z^{-}.
 \end{equation}
From \eqref{eq:lambda} and \eqref{eq:phi}, we have the expression of $\lambda'(z^{-})$ and $\lambda''(z^{-})$, as $n\rightarrow\infty$,
\begin{equation}\label{asy:lambda1}
\lambda'(z^{-})=(-\phi''(z^{-}))^{\frac{1}{2}}, \quad \lambda''(z^{-})=-\frac{\phi'''(z^{-})}{3(-\phi''(z^{-}))^{\frac{1}{2}}},
\end{equation}
where
\begin{equation}
\phi''(z^{-})=-\frac{\tau}{(z^{-})^{3}}-\frac{1}{(z^{-})^{2}},\quad \phi'''(z^{-})=\frac{3\tau}{(z^{-})^{4}}+\frac{2}{(z^{-})^{3}}.
\end{equation}
Let $\Phi^{(PC)}$ be the parabolic cylinder parametrix given in Appendix \ref{PCP}. Then the parametrix near
$z=z^{-}$ can be constructed as 
\begin{equation}\label{solu:P}
\begin{aligned}
 P(z)=&E(z)(2d_1\nu)^{-\frac{\sigma_3}{2}} \begin{pmatrix}n^{\frac{1}{2}}\lambda(z) & 1 \\ 1 & 0\end{pmatrix} \Phi^{(PC)}(n^{1/2}\lambda(z))\\
 &\times (d_1\nu)^{\frac{\sigma_{3}}{2}}C(z)\sigma_{1}z^{-\frac{\nu}{2}\sigma_{3}}e^{-\frac{n}{2}(\phi(z)-\phi(z^{-}))\sigma_3}e^{\mp\frac{1}{2}\pi i \nu\sigma_{3}}
 \end{aligned}
 \end{equation}
for $\pm \arg z\in(0,\pi),$ where $ d_{1}=\frac{\sqrt{2 \pi}}{\Gamma(-\nu+1)}$ for $\nu \in \mathbb{C} $ and $\nu \neq 1,2, 3\cdots$.
Here the matrix $C(z)=I$ for $\arg z\in(-\frac{\pi}{4},\pi)$ and $C(z)=e^{2\pi i \nu\sigma_{3}}$ for $\arg z\in(\pi,\frac{7\pi}{4})$.
The function $E(z)$ is defined as
\begin{equation}\label{eq:E1}
E(z)=(z-z^{-})^{\nu\sigma_{3}}z^{-\frac{\nu}{2}\sigma_{3}}e^{\pm\frac{1}{2}\pi i \nu\sigma_{3}}\sigma_{1}\lambda(z)^{\nu\sigma_{3}}n^{\frac{1}{2}\nu\sigma_{3}} \end{equation}
for $\pm \arg z\in(0,\pi)$.
Here, the branch for $z^{\nu/2}$ is chosen such that $\arg z\in(-\pi,\pi)$, and the branch of the function $\lambda(z)^{\nu}$ is chosen such that $\arg\lambda(z)\in(-\pi,\pi)$.
It is direct to see that $E(z)$ is analytic in the neighborhood $U(z^{-},\delta)$.
From \eqref{eq:Parab1} and using the recurrence relation \cite[Eq. (12.8.2)]{Olver}
\begin{equation}\label{eq:RecPra}
\frac{z}{2}D_{\nu}(z)+D_{\nu}'(z)=\nu D_{\nu-1}(z),
\end{equation}
we see that
\begin{equation}\label{identity:PC}
 (2\nu)^{-\frac{\sigma_3}{2}} \begin{pmatrix}z& 1 \\ 1 & 0\end{pmatrix} \Phi^{(PC)}(z)\nu^{\frac{\sigma_{3}}{2}} =\begin{pmatrix}
\frac{z}{2}D_{-\nu-1}(i z)+D_{-\nu-1}'(i z) & D_{\nu-1}(z) \\ \nu D_{-\nu-1}(i z) & D_{\nu}(z)\end{pmatrix}
\begin{pmatrix}
e^{i \frac{\pi}{2}(\nu+1)} & 0 \\
0 & 1
\end{pmatrix}
 \end{equation}
 for $\arg z\in(-\frac{\pi}{4},0)$. Therefore, the  parametrix  $P(z)$ is  also well-defined when the parameter $\nu=0$.


From \eqref{solu:P} and \eqref{eq:NSolution}, we have
\begin{equation}\label{P-:match1}
 P(z)N(z)^{-1} =W(z)\begin{pmatrix}1+\frac{\nu(\nu+1)}{2n\lambda(z)^{2}}+O(n^{-2})&
 \frac{1}{n^{\frac{1}{2}}\lambda(z)}+O(n^{-\frac{3}{2}})\\
\frac{\nu}{n^{\frac{1}{2}}\lambda(z)}+O(n^{-\frac{3}{2}})&
1-\frac{\nu(\nu-1)}{2n\lambda(z)^{2}}+O(n^{-2})\end{pmatrix}W(z)^{-1},
\end{equation}
with
\begin{equation}
W(z)=(z-z^{-})^{\nu\sigma_{3}}z^{-\frac{\nu}{2}\sigma_{3}}e^{\pm\frac{1}{2}\pi i \nu\sigma_{3}}\sigma_{1}\lambda(z)^{\nu\sigma_{3}}
n^{\frac{1}{2}\nu\sigma_{3}}d_1^{-\frac{\sigma_{3}}{2}},
\end{equation}
where the error term is uniform for  $z$ on the boundary of $U(z^{-},\delta)$ and for the parameter $\nu$ in any compact subset of $\mathbb{C}\setminus \{1,2,3\cdots\}$.
Therefore, we have
\begin{equation}\label{P-:match}
\begin{aligned}
P(z)N(z)^{-1}
=&n^{-\frac{1}{2}\nu\sigma_{3}}\left(I+\frac 1 {\sqrt n}\begin{pmatrix}0&
\frac{d_1\nu\left(z-z^{-}\right)^{2\nu} z^{-\nu}e^{\pm\pi i\nu}}{\lambda(z)^{2\nu+1}}\\
\frac{\lambda(z)^{2\nu-1} z^{\nu}e^{\mp\pi i\nu}}{d_{1}\left(z-z^{-}\right)^{2\nu}}&0\end{pmatrix}
\right.\\
&\left.+\frac{1}{n}\begin{pmatrix}-\frac{\nu(\nu-1)}{2\lambda^{2}(z)} & 0 \\ 0 &\frac{\nu(\nu+1)}{2\lambda^{2}(z)}\end{pmatrix}
+O\left(n^{-\frac{3}{2}}\right )\right)n^{\frac{1}{2}\nu\sigma_{3}},
\end{aligned}
\end{equation}
where  $ d_{1}=\frac{\sqrt{2 \pi}}{\Gamma(-\nu+1)}$ and $\pm\arg z\in(0,\pi)$.


\subsection{Final transformation: $S\to R$}
We define
\begin{equation}\label{eq:R}
  R(z)=\left\{ \begin{array}{ll}
                     n^{\frac{ \nu}{2}\sigma_3} S(z)N(z)^{-1}n^{-\frac{ \nu}{2}\sigma_3}, & \hbox{$|z-z^{-}|>\delta$,} \\
                   n^{\frac{ \nu}{2}\sigma_3} S(z)P(z)^{-1}n^{-\frac{ \nu}{2}\sigma_3}, & \hbox{$|z-z^{-}|<\delta$.}
                    \end{array}
\right.
 \end{equation}
\begin{figure}[h]
  \centering
  \includegraphics[width=8cm,height=5cm]{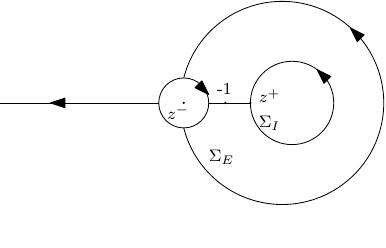}\\
  \caption{ Contours  for the RH problem for $R(z)$}\label{Rcontour}
\end{figure}
Then we have the following RH problem.
\begin{rhp}\label{RHP:R}
\item[\rm(1)] $R(z)$ is analytic for $z\in \mathbb{C}\setminus\Sigma_{R}$, where the contour $\Sigma_{R}$ is illustrated in Figure \ref{Rcontour}.
\item[\rm(2)] On the contour $\Sigma_{R}$, we have
\begin{equation}\label{Jump:R}
R_+(z)=R_-(z)J_{R}(z),
\end{equation}
where
 \begin{equation}\label{eq:JR}
 J_{R}(z)=\left\{\begin{aligned}
&n^{\frac{1}{2}\nu\sigma_{3}}P(z)N(z)^{-1}n^{-\frac{1}{2}\nu\sigma_{3}},\quad &z&\in \partial U(z^{-},\delta),\\
&n^{\frac{1}{2}\nu\sigma_{3}}N(z)J_{S}(z)N(z)^{-1}n^{-\frac{1}{2}\nu\sigma_{3}},\quad &z&\in \Sigma_{R}\setminus\partial U(z^{-},\delta).\\ \end{aligned}
\right.
 \end{equation}
\item[\rm(3)] As $z\rightarrow\infty$, we have
\begin{equation}\label{R:infty}
R(z)=I+\frac{R_1}{z}+O\left(\frac{1}{z^{2}}\right).
\end{equation}
\end{rhp}
From the matching condition \eqref{P-:match} and the properties of $\phi(z)$ stated
in \eqref{eq:phiSgn}, we have the following estimates
\begin{equation}\label{JRestimation}
J_{R}(z)=\left\{\begin{aligned}
&I+O(n^{-\frac{1}{2}}),  &z&\in\partial U(z^{-},\delta),\\
&I+O(e^{-c_1n}),\quad &z&\in \Sigma_{R}\setminus\partial U(z^{-},\delta),
\end{aligned}\right.
\end{equation}
where $c_1$ is a positive constant.
This, together with  \eqref{P-:match} and \eqref{Jump:R}, implies that $J_{R}(z)$ has an expansion of the form for $|z-z^{-}|=\delta$
\begin{equation}\label{JRexpansion}
J_{R}(z)=I+\frac{J^{(1)}_{R}(z)}{n^{ {1}/{2}}}+\frac{J^{(2)}_{R}(z)}{n}+O(n^{- {3}/{2}}), \quad n\rightarrow\infty,
\end{equation}
where
\begin{equation}\label{eq:JR1}
J^{(1)}_{R}(z)=\begin{pmatrix}0 & \frac{d_{1}\nu(z-z^{-})^{2\nu}z^{-\nu}e^{\pm\pi i\nu}}{\lambda(z)^{2\nu+1}} \\
\frac{\lambda(z)^{2\nu-1}z^{\nu}e^{\mp\pi i\nu}}{d_1(z-z^{-})^{2\nu}} & 0\end{pmatrix},
 \end{equation}
for $\pm\arg z\in(0,\pi)$ and
 \begin{equation}\label{eq:JR2}
J^{(2)}_{R}(z)=\begin{pmatrix}-\frac{\nu(\nu-1)}{2\lambda^{2}(z)} & 0 \\ 0 &\frac{\nu(\nu+1)}{2\lambda^{2}(z)}\end{pmatrix}.
 \end{equation}

From \eqref{JRestimation}, we see that the jump matrix $J_{R}(z)$ is close to the identity matrix as $n\to\infty$, with an error term $O(n^{ -{1}/{2}})$ uniformly for $z\in\Sigma_{R}$. Therefore, $R(z)$ satisfies a small-norm RH problem. From the general theory for small-norm RH problems presented in \cite[Section 7.2]{DKMVZ2} and \cite[Section 7.5]{D1}, we see that the RH problem for $R(z)$ is solvable for large enough $n$.
Moreover, using \eqref{JRexpansion}, we have the asymptotic expansion of $R(z)$ in the following form as $n\to\infty$
 \begin{equation}\label{eq:REst}
 R(z)=I+\frac{R^{(1)}(z)}{n^{ {1}/{2}}}+\frac{R^{(2)}(z)}{n}+O(n^{-3/2}),
 \end{equation}
where the error term is  uniform for $z$ in $\mathbb{C}\setminus \Sigma_{R}$ and  the parameter $\nu$ in any compact subset of $\mathbb{C}\setminus \{1,2,3\cdots\}$.

 For later use, we calculate the functions $R^{(1)}(z)$ and $R^{(2)}(z)$ in \eqref{eq:REst}.
 Inserting \eqref{eq:REst} and \eqref{JRexpansion} into the jump condition \eqref{Jump:R} for $R(z)$ yields
%

\begin{equation}\label{Jump:R1}
 R^{(1)}_{+}(z)=R^{(1)}_{-}(z)+J^{(1)}_{R}(z),\quad  R^{(2)}_{+}(z)=R^{(2)}_{-}(z)+R^{(1)}_{-}(z)J^{(1)}_{R}(z)+J^{(2)}_{R}(z),
  \end{equation}
  for $z\in \partial U(z^{-},\delta).$ This, together with the facts that $R^{(1)}(z)=O(z^{-1})$ and $R^{(2)}(z)=O(z^{-1})$ as $z\rightarrow\infty$, implies that
 \begin{equation}\label{Integral:R1}
 R^{(1)}(z)=\frac{1}{2\pi i} \oint_{|x-z^{-}|=\delta}\frac{J^{(1)}_{R}(x)}{x-z}dx,
 \end{equation}
and
  \begin{equation}\label{Integral:R2}
 R^{(2)}(z)=\frac{1}{2\pi i} \oint_{|x-z^{-}|=\delta}\frac{R^{(1)}_{-}(x)J^{(1)}_{R}(x)+J^{(2)}_{R}(x)}{x-z}dx.
 \end{equation}
From the definition of $J^{(1)}_{R}(z)$ given in \eqref{eq:JR1}, we obtain from \eqref{eq:lambda}, \eqref{Integral:R1} and the residue theorem
that
 \begin{equation}\label{solu:R1}
 R^{(1)}(z)=\frac{A^{(1)}}{z-z^{-}},\quad z\in\mathbb{C}\setminus U(z^{-},\delta),
 \end{equation}
 where
 \begin{equation}\label{A1}
 A^{(1)}=\Res\limits_{z=z^{-}}J^{(1)}_{R}(z)=\begin{pmatrix}0 & \frac{d_{1}\nu|z^{-}|^{-\nu}}{\lambda'(z^{-})^{2\nu+1}} \\
d_{1}^{-1}\lambda'(z^{-})^{2\nu-1}|z^{-}|^{\nu}& 0\end{pmatrix}. 
  \end{equation}
  In view of \eqref{JRexpansion} and \eqref{Integral:R2}, we have
 \begin{equation}\label{eq:R2}
  R^{(2)}(z)=\frac{A^{(2)}}{z-z^{-}}+\frac{B^{(2)}}{(z-z^{-})^{2}},\quad z\in\mathbb{C}\setminus U(z^{-},\delta),
  \end{equation}
 where
\begin{equation}\label{A2}
 A^{(2)}=R^{(1)}_{-}(z^{-})A^{(1)}+\Res\limits_{z=z^{-}}J^{(2)}_{R}(z),\quad B^{(2)}=\Res\limits_{z=z^{-}}((z-z^{-})J^{(2)}_{R}(z)).
 \end{equation}
From  \eqref{eq:JR2},  we have
\begin{equation}\label{eq:ResR}
 \Res\limits_{z=z^{-}}J^{(2)}_{R}(z)=-\frac{\lambda''(z^{-})}{\lambda'(z^{-})^{3}}\begin{pmatrix}-\frac{\nu(\nu-1)}{2}&0\\0&\frac{\nu(\nu+1)}{2}\end{pmatrix},
  \end{equation}
  and
\begin{equation}
 \Res\limits_{z=z^{-}}\left((z-z^{-})J^{(2)}_{R}(z)\right)=\frac{1}{\lambda'(z^{-})^{2}}\begin{pmatrix}-\frac{\nu(\nu-1)}{2}&0\\0&\frac{\nu(\nu+1)}{2}\end{pmatrix}.
  \end{equation}
  Therefore, we have
  \begin{equation}\label{eq:B2Exp}
 B^{(2)}=\frac{1}{\lambda'(z^{-})^{2}}\begin{pmatrix}-\frac{\nu(\nu-1)}{2}&0\\0&\frac{\nu(\nu+1)}{2}\end{pmatrix}.
  \end{equation}
  To derive $ A^{(2)}$,  we have from \eqref{Jump:R1} and \eqref{Integral:R1} that
 \begin{equation} \begin{aligned}
 &R^{(1)}_{-}(z^{-})\\
 &=-\lim\limits_{z\rightarrow z^{-}}\left(J^{(1)}_{R}(z)-\frac{A^{(1)}}{z-z^{-}}\right)\\
 &=\begin{pmatrix}0 & \frac{d_{1}\nu|z^{-}|^{-\nu}}{\lambda'(z^{-})^{2\nu+1}}\left(\frac{\nu}{z^{-}}+\frac{(2\nu+1)\lambda''(z^{-})}{2\lambda'(z^{-})}\right)\\ - d_{1}^{-1} |z^{-}|^{\nu}\lambda'(z^{-})^{2\nu-1}\left(\frac{\nu}{z^{-}}+\frac{(2\nu-1)\lambda''(z^{-})}{2\lambda'(z^{-})}\right) & 0\end{pmatrix}. \label{eq:R1z}
 \end{aligned}
  \end{equation}
Substituting \eqref{A1}, \eqref{eq:ResR} and \eqref{eq:R1z}
 into  \eqref{A2}, we have
  \begin{equation}
  \begin{aligned}
  A^{(2)}=&\begin{pmatrix}\frac{\nu}{\lambda'(z^{-})^{2}}(\frac{\nu}{z^{-}}+\frac{(2\nu+1)\lambda''(z^{-})}{2\lambda'(z^{-})}) & 0 \\ 0 & -\frac{\nu}{\lambda'(z^{-})^{2}}(\frac{\nu}{z^{-}}+\frac{(2\nu-1)\lambda''(z^{-})}{2\lambda'(z^{-})})\end{pmatrix}
 \\ &-\frac{\lambda''(z^{-})}{\lambda'(z^{-})^{3}}\begin{pmatrix}-\frac{\nu(\nu-1)}{2}&0\\0&\frac{\nu(\nu+1)}{2}\end{pmatrix}\\
=&\begin{pmatrix}\frac{\nu^{2}}{z^{-}\lambda'(z^{-})^{2}}+\frac{3\nu^2\lambda''(z^{-})}{2\lambda'(z^{-})^{3}}&0\\0&-
\frac{\nu^{2}}{z^{-}\lambda'(z^{-})^{2}}-\frac{3\nu^2\lambda''(z^{-})}{2\lambda'(z^{-})^{3}}\end{pmatrix}.\label{eq:A2Exp}
  \end{aligned}
  \end{equation}

\section{Asymptotics of $Y(z;n\tau)$: case  $\tau>1$}\label{sec:Asycase2}
In this section, we perform the nonlinear steepest descent analysis of the RH problem for  $Y(z;n\tau)$ when $\tau>1$.
The analysis is similar to that performed in $\cite{BDJ}$ where the case $\nu=0$ was considered. 

For later use, we introduce the $g$-function as that used in \cite[Lemma 4.3]{BDJ}:
\begin{equation}\label{g-function}
g(z)=\int_{\xi^{-1}}^{\xi}\log(z-s)\psi(s)ds \quad\text{for} ~z\in \mathbb{C}\setminus \left( (-\infty,-1)\cup \{e^{i\varphi}:-\pi\leq\varphi\leq\theta_{c}\} \right),
\end{equation}
where $\xi=e^{i\theta_{c}}$, $\sin^2\frac{\theta_{c}}{2}=\frac{1}{\tau}, ~0<\theta_{c}<\pi$ and  for each $s=e^{i\theta}$, the branch is chosen such that $\log(z-e^{i\theta})$ is analytic in $\mathbb{C} \setminus \left((-\infty,-1] \cup\{e^{i\varphi}:-\pi\leq\varphi\leq\theta\}\right)$ and $\log(z-e^{i\theta})\sim \log z$  as $z\to+\infty$.
Define $C_1=\{e^{i\varphi}: \theta_c<|\varphi|\leq\pi\}$ and $C_2=\{e^{i\varphi}: -\theta_c\leq\varphi\leq\theta_c\}$. The  density function $\psi(s)$ is given by
\begin{equation}\label{eq:psi}
\psi(s)=\frac{\tau}{4\pi i}\frac{s+1}{s^2}\left(\sqrt{(s-\xi)(s-\xi^{-1})}\right)_{-},\quad s=e^{i\theta}\in C_{2},
\end{equation}
where the branch is chosen such that $\sqrt{(s-\xi)(s-\xi^{-1})}$ is analytic in $\mathbb{C} \setminus C_{2}$
and behaves like $s$ as $s\rightarrow\infty$. Here $\left(\sqrt{(s-\xi)(s-\xi^{-1})}\right)_{-}$  denotes the limits as $s$ approaches $C_2$ from the negative side  with the orientation of $C_2$ specified in Fig. \ref{Yhatcontour1}.
We also define the following  $\phi$-functions
\begin{equation}\label{phi}
\phi(z)=-\frac{\tau}{4}\int_{\xi}^{z}\frac{s+1}{s^2}\sqrt{(s-\xi)(s-\xi^{-1})}ds,\quad z\in \mathbb{C}\setminus\{ C_{2}\cup(-\infty,0)\},
\end{equation}
and
\begin{equation}\label{tildephi}
\widetilde{\phi}(z)=-\frac{\tau}{4}\int_{\xi^{-1}}^{z}\frac{s+1}{s^2}\sqrt{(s-\xi)(s-\xi^{-1})}ds,\quad z\in  \mathbb{C}\setminus\{ C_{2}\cup(-\infty,0)\}.
\end{equation}
Here the paths of integration do not cross the cuts $ C_{2}\cup(-\infty,0)$.
 From the definitions of $\phi(z)$ and $\widetilde{\phi}(z)$, we have the relationship
\begin{equation}
\widetilde{\phi}(z)=\overline{\phi(\overline{z})},\quad z\in \mathbb{C}\setminus C_{2}.
\end{equation}

\begin{figure}[h]
  \centering
  \includegraphics[width=7.5cm,height=3.2cm]{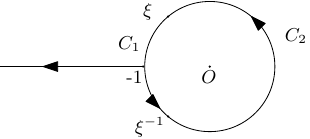}\\
  \caption{The contours $\Sigma_{\widehat{Y}}$ of the RH problem for $\widehat{Y}$}\label{Yhatcontour1}
\end{figure}
According to \cite[Lemma 4.2 and 4.3]{BDJ}, 
we see that $g(z)$ satisfies the Euler-Lagrange equation
\begin{equation}\label{eq:E-L}
g_{+}(z)+g_{-}(z)-V(z)+l=
\left\{
\begin{array}{ll}
\log z+\pi i, & \arg z\in (-\theta_c, \theta_c),\\
\log z-2\phi(z)+\pi i, & \arg z\in (\theta_c, \pi),\\
\log z-2\widetilde{\phi}(z)+\pi i, &\arg z\in ( -\pi,-\theta_c),\\
\log|z|-2\widetilde{\phi}_{-}(z), & z\in (-\infty,-1),  
\end{array}
\right.
\end{equation}
where the Lagrange multiplier $l=-\tau+\log\tau+1$ and the potential $V(z)=-\frac{\tau}{2}(z+\frac{1}{z})$. Moreover, the $g$-function and the
$\phi$-function are related by
\begin{equation}\label{Jump:g}
g_{+}(z)-g_{-}(z)=
\left\{
\begin{array}{ll}
\mp2\phi_{\pm}(z), & \arg z\in (-\theta_c, \theta_c),\\
0,& \arg z\in (\theta_c, \pi),\\
2\pi i, &\arg z\in ( -\pi,-\theta_c),\\
-2\pi i, &z\in (-\infty,-1).
\end{array}
\right.
\end{equation}

\subsection{Normalization: $\widehat{Y} \to T $}
To normalize the large-$z$ behavior of $\widehat{Y}$ in \eqref{eq:Yhat} with the $g$-function defined in \eqref{g-function}, we introduce the transformation $\widehat{Y}\rightarrow T$ as follows
\begin{equation}\label{trans:T}
T(z)=
\left\{
\begin{aligned}
&e^{\frac{nl}{2}\sigma_{3}}\widehat{Y}(z)e^{-ng(z)\sigma_{3}}e^{-\frac{nl}{2}\sigma_{3}},
&\text{for even }~n,\\
&e^{\frac{nl}{2}\sigma_{3}}\sigma_{3}\widehat{Y}(z)\sigma_{3}
e^{-ng(z)\sigma_{3}}e^{-\frac{nl}{2}\sigma_{3}}, &\text{for odd }~n,
\end{aligned}
\right.
\end{equation}
where $l$ is defined in \eqref{eq:E-L}. Then $T(z)$ satisfies the following RH problem.

\begin{rhp}\label{RHP:TT}
\item[\rm(1)] $T(z)$ is analytic for $z\in \mathbb{C}\setminus\Sigma_{\widehat{Y}}$, where the contours $\Sigma_{\widehat{Y}}= (-\infty,-1)\cup C$ 
are depicted in Figure \ref{Yhatcontour1}.
\item[\rm(2)] $T(z)$   has continuous boundary values on  $\Sigma_{\widehat{Y}}$ and they are related by
\begin{equation}\label{Jump:T}
T_{+}(z)=T_{-}(z)\left\{
\begin{aligned}
&\begin{pmatrix}e^{n(g_-(z)-g_+(z))} & (-1)^{n}z^{\nu-n}e^{n(g_+(z)+g_-(z)-V(z)+l)}\\  0& e^{n(g_+(z)-g_-(z))} \end{pmatrix}, &z&\in C,\\
&\begin{pmatrix}e^{n(g_-(z)-g_+(z))} & (-1)^{n}2i\sin(\pi\nu) |z|^{\nu}z^{-n}e^{n(g_+(z)+g_-(z)-V(z)+l)}
\\  0& e^{n(g_+(z)-g_-(z))} \end{pmatrix}, &z&\in(-\infty,-1)
\end{aligned}
\right.
\end{equation}
  where $\nu\in \mathbb{C}$ and $V(z)$ is defined in \eqref{eq:E-L}.
\item[\rm(3)]
$T(z)=I+O(z^{-1}),\quad \mathrm{as}\quad z\rightarrow\infty.$
\end{rhp}
From the properties (\ref{eq:E-L}) and (\ref{Jump:g}), the jump condition in the above RH problem
can be expressed in terms of the function $\phi(z)$ and $\widetilde{\phi}(z)$  as follows:
\begin{equation}
T_{+}(z)=T_{-}(z)\left\{
\begin{aligned}
&\begin{pmatrix}e^{2n\phi_{+}(z)} & z^{\nu}\\  0& e^{2n\phi_{-}(z)} \end{pmatrix},\quad &z&\in C_{2},\\
&\begin{pmatrix}1 & z^{\nu}e^{-2n\phi(z)}\\  0& 1 \end{pmatrix},\quad &z&\in C_{1},\Im z>0,\\
&\begin{pmatrix}1 & z^{\nu}e^{-2n\widetilde{\phi}(z)}\\  0& 1 \end{pmatrix},\quad &z&\in C_{1},\Im z<0,\\
&\begin{pmatrix}1 & |z|^{\nu}(e^{\pi i\nu}-e^{-\pi i \nu})e^{-2n\widetilde{\phi}_{-}(z)}\\  0& 1 \end{pmatrix},\quad &z&\in (-\infty,-1).
\end{aligned}
\right.
\end{equation}
\subsection{Deformation: $T\to S$}
Since $\phi_{\pm}(z)$ are purely imaginary on $C_{2}$, the jump matrix for $T (z)$ on $z\in C_{2}$
possesses highly oscillatory diagonal entries. To remove the oscillation, we introduce the second transformation $T \rightarrow S$, based on a factorization
of the oscillatory jump matrix and a deformation of contours. We define
\begin{equation}\label{eq:TtoS}
  S(z)=\left\{ \begin{array}{ll}
  T(z) \begin{pmatrix}1 & 0 \\z^{-\nu} e^{2n\phi(z)} & 1\end{pmatrix} , & \hbox{for $z\in  \Omega_E$,}\\[.5cm]
  T(z)\begin{pmatrix}1 & 0 \\ -z^{-\nu}e^{2n\phi(z)} &1\end{pmatrix},   &\hbox{for $z\in  \Omega_I$,}\\
  T(z), & \hbox{for $z\in \mathbb{C}\setminus(\Omega_E\cup\Omega_I\cup\Sigma_S)$,} \\
  \end{array}
\right.
 \end{equation}
 where $ \Sigma_S=\Sigma_E\cup \Sigma_I\cup C_{1}\cup C_{2}\cup (-\infty,-1)$, with $\Sigma_E$ and $\Sigma_I$ being the boundaries  of the lens-shaped regions  $\Omega_E$ and $\Omega_I$ as indicated in Figure \ref{tauS2}.
 
 \begin{figure}[ht]
  \centering
  \includegraphics[width=11cm,height=6.2cm]{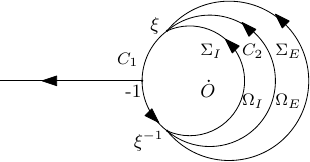}\\
  \caption{Contours and regions for the RH problem for $S(z)$}\label{tauS2}
\end{figure}

Then $S(z)$ solves the following RH problem.
\begin{rhp}\label{RHP:SS}
\item[\rm (1)] $S(z)$ is analytic in $\mathbb{C} \setminus \Sigma_S$; see Figure \ref{tauS2}.
\item[\rm (2)] $S(z)$  has continuous boundary values on  $\Sigma_S$ and they are related by
  \begin{equation}\label{Jump:S}
  S_{+}(z)=S_{-}(z)
  \left\{ \begin{array}{lll}
\begin{pmatrix}1 & 0 \\ z^{-\nu} e^{2n\phi(z)} & 1\end{pmatrix} , & \hbox{for $z\in  \Sigma_E$,}\\[.5cm]
\begin{pmatrix}0& z^{\nu} \\-z^{-\nu} &0\end{pmatrix} , & \hbox{for $z\in C_{2}$,} \\[.5cm]
\begin{pmatrix}1&z^{\nu}e^{-2n\phi(z)}\\0 &1\end{pmatrix} , & \hbox{for $z\in C_{1},~\Im z>0$,} \\[.5cm]
\begin{pmatrix}1&z^{\nu}e^{-2n\widetilde{\phi}(z)}\\0 &1\end{pmatrix} , & \hbox{for $z\in C_{1},~\Im z<0$,} \\[.5cm]
 \begin{pmatrix}1 &0\\ z^{-\nu} e^{2n\phi(z)}  &1\end{pmatrix}  ,   &   \hbox{for $z\in  \Sigma_I$,}\\[.5cm]
 \begin{pmatrix}1 & |z|^{\nu}(e^{\pi i \nu}-e^{-\pi i \nu})e^{-2n\widetilde{\phi}(z)}\\  0& 1 \end{pmatrix},&   \hbox{for $z\in  (-\infty,-1)$,}\\[.5cm]
  \end{array}
\right.
  \end{equation}
  with $\phi(z)$ and $\widetilde{\phi}(z)$ defined in \eqref{phi} and \eqref{tildephi}.
\item[\rm (3)] As $z\to \infty$, we have
  \begin{equation}\label{eq:S2Infinity}
  S(z)=I+O\left (\frac 1 {z}\right ).
 \end{equation}
\end{rhp}
It is seen from \eqref{phi} and \eqref{tildephi} that $\phi(z)$ and $\widetilde{\phi}(z)$ are purely imaginary for $z\in C_{2}$. It then follows from the Cauchy-Riemann equation that
\begin{equation}\label{eq:Rephi}
\left\{ \begin{array}{lll}
\Re  \phi_{\pm}(z)=0,  &z\in C_{2},\\
\Re \phi(z)>0, &z\in C_{1},\Im z>0,\\
\Re \widetilde{\phi}(z)>0, &z\in C_{1},\Im z<0,\\
\Re \phi(z)<0, &z\in \Sigma_E\cup\Sigma_I,\\
\Re \widetilde{\phi}_{\pm}(z)>0, &z\in (-\infty,-1),
  \end{array}
\right.
\end{equation}
where $\Sigma_E$ and $\Sigma_I$ are some arcs outside and inside of the unit circle as shown in Figure \ref{tauS2};
see also \cite[pages 1150-1151]{BDJ}.
\subsection{Global parametrix: $N$}
It is readily  seen from \eqref{eq:Rephi}
that on $\Sigma_I\cup\Sigma_E\cup C_{1}$ and bounded away from $z=-1$, the jump matrices for $S$ tend to the identity matrix exponentially fast as $n\to\infty$.
 Then, we arrive at the following approximate RH problem for $n$ large.
\begin{rhp}\label{RHP:NN}
We look for a $2 \times 2$ matrix-valued function $N(z)$ satisfying the following properties.
\begin{itemize}
\item[\rm (1)] $N(z)$ is analytic in $\mathbb{C} \setminus C_{2}$.
\item[\rm (2)] $N(z)$ has continuous boundary values on  $ C_{2}$ and they are related by 
  \begin{equation}\label{eq:Njump}
  N_{+}(z)=N_{-}(z)
 \begin{pmatrix}
  0& z^{\nu} \\
 - z^{-\nu} &0
  \end{pmatrix} ,\qquad z\in C_{2}.
  \end{equation}
  \item[\rm (3)] As $z\to \infty$, we have
  \begin{equation}\label{eq:N2Infinity}
  N(z)=I+O\left (\frac 1 {z}\right ).
 \end{equation}
\end{itemize}
\end{rhp}
A solution to the above RH problem can be constructed explicitly as follows
\begin{equation}\label{solu:N}
N(z)=D_{\infty}^{\sigma_{3}}X(z)D(z)^{-\sigma_{3}},
\end{equation}
where $X(z)$ is given by
\begin{equation}\label{eq:Nw}
X(z)=\frac{1}{2}\begin{pmatrix}\varrho+\varrho^{-1}&-i(\varrho-\varrho^{-1})\\
i(\varrho-\varrho^{-1})&\varrho+\varrho^{-1}\end{pmatrix},\quad \varrho=\varrho(z)=\left(\frac{z-\xi}{z-\xi^{-1}}\right)^{\frac{1}{4}}.
\end{equation} Here the branch for $\varrho(z)$ is chosen such that $\varrho(z)$ is analytic in $\mathbb{C} \setminus C_{2}$, behaves like $1$
as $z\rightarrow\infty$.
While  the Szeg\H{o} function $D(z)$ is defined as follows
\begin{equation}\label{Szego:D}
D(z)=(\varphi(z))^{\nu}, \quad \varphi(z)=\frac{z+1-\sqrt{(z-\xi)(z-\xi^{-1})}}{2\cos\frac{\theta_{c}}{2}},
\end{equation}
where $\xi=e^{i\theta_c}$, $z^{\nu}$ takes the principal branch and $\sqrt{(z-\xi)(z-\xi^{-1})}$ is analytic in $\mathbb{C} \setminus C_{2}$, and behaves like $z$
as $z\rightarrow\infty$. Therefore, $\varphi(z)$ is a conformal mapping from $\mathbb{C}\setminus C_{2}$ onto the inside of the unit circle with the center at $1/\cos\left(\frac{\theta_{c}}{2}\right)$. 
By \eqref{Szego:D}, it is easily seen that
\begin{equation}\label{Djumps}
D_{+}(z)D_{-}(z)=z^{\nu},
\end{equation}
for $z\in C_{2}$, and
\begin{equation}\label{Dinfty}
D_{\infty}=\lim_{z\rightarrow\infty}D(z)=\left(\cos\left(\frac{\theta_{c}}{2}\right)\right)^{\nu}.
\end{equation}
\subsection{Local parametrices: $P^{({\pm} )}$}
In this subsection, we seek two  parametrices $P^{({\pm} )}(z)$ satisfying the same jump condition as $S(z)$ in the neighborhoods $U(\xi^{\pm 1},\delta)$, and matching with $N(z)$ on the boundaries $\partial U(\xi^{\pm 1},\delta)$,  respectively. 
\begin{rhp}\label{RHP:P+}
\item[\rm (1)] $P^{({+})}(z)$ is analytic in $U(\xi,\delta)\setminus \Sigma_S$.
\item[\rm (2)] $P^{({+})}(z)$  satisfies the same  jump condition as $S(z)$ on $U(\xi,\delta)\cap \Sigma_S$, that is
\begin{equation}\label{eq:PJump}
  P^{({+})}_{+}(z)=P^{({+})}_{-}(z)
  \left\{ \begin{array}{lll}
\begin{pmatrix}1 & 0 \\ z^{-\nu} e^{2n\phi(z)} & 1\end{pmatrix} , & \hbox{for $z\in  \Sigma_E$}\cap U(\xi,\delta),\\[.5cm]
\begin{pmatrix}1&z^{\nu}e^{-2n\phi(z)}\\0 &1\end{pmatrix} , & \hbox{for $z\in C_{1}$} \cap U(\xi,\delta),\\[.5cm]
\begin{pmatrix}0& z^{\nu} \\-z^{-\nu} &0\end{pmatrix}, & \hbox{for $z\in C_{2}$} \cap U(\xi,\delta),\\[.5cm]
 \begin{pmatrix}1 & 0 \\z^{-\nu} e^{2n\phi(z)} &1\end{pmatrix}  ,   &   \hbox{for $z\in  \Sigma_I$}\cap U(\xi,\delta).\\
  \end{array}
\right.
  \end{equation}
   \item[\rm (3)] On the boundary of $U(\xi,\delta)$, $P^{({+})}(z)$ satisfies the matching condition
  \begin{equation}\label{eq:Matching+}
P^{({+})}(z)N^{-1}(z)= I+O(1/n).
 \end{equation}
\end{rhp}
To construct a solution to above RH problem, we first introduce the conformal mapping near $z=\xi$
\begin{equation}\label{eq:f}
f(z)=\left(\frac{3}{2}\phi(z)\right )^{\frac{2}{3}}\sim (\tau-1)^{\frac{1}{2}}e^{-i(\theta_{c}+
\frac{\pi}{2})}(z-\xi),\quad z \rightarrow \xi.
\end{equation}
Then, the solution to the above RH problem can be built out of the Airy function as follows:
\begin{equation}\label{solu:P+}
P^{({+})}(z)=E(z)\Phi^{\mathrm{(Ai)}}(n^{\frac{2}{3}}f(z))e^{n\phi(z)\sigma_{3}}z^{-\frac{\nu}{2}\sigma_{3}},
\end{equation}
where $\Phi^{\mathrm{(Ai)}}$ denotes the standard Airy parametrix given in  Appendix \ref{AP} and $E(z)$ is defined by
\begin{equation}\label{eq:E}
E(z)=N(z)z^{\frac{\nu}{2}\sigma_{3}}\frac{1}{\sqrt{2}}\begin{pmatrix}1&-i\\-i&1\end{pmatrix}(n^{\frac{2}{3}}f(z))^{\frac{\sigma_{3}}{4}}.
\end{equation}
The branches are chosen such that $\arg z\in(-\pi,\pi)$ and  $\arg f(z)\in(-\pi, \pi)$.
It follows from \eqref{eq:Njump} and \eqref{eq:E} that $E(z)$ is analytic in $U(\xi,\delta)$. Using \eqref{solu:N},  \eqref{solu:P+} and \eqref{AiryAsyatinfty}, we get the matching condition as $n\to\infty$
\begin{equation}\label{asy:P+N}
\begin{aligned}
&P^{({+})}(z)N(z)^{-1}\\
&=N(z)z^{\frac{\nu}{2}\sigma_{3}}\left(I+\frac{1}{48 n f(z)^{\frac{3}{2}}}
\begin{pmatrix}1&6i\\6i&-1\end{pmatrix}+O\left(\frac 1 {n^{2}}\right)\right)z^{-\frac{\nu}{2}\sigma_{3}}N(z)^{-1},\\
&=I+\frac{Q^{(+)}(z)}{n}+O\left(\frac 1 {n^{ 2}}\right),
\end{aligned}
\end{equation}
where the error term is uniform for $z\in\partial U(\xi,\delta) $.
The matrix $Q^{(+)}(z)$ is  defined by
\begin{equation}
Q^{(+)}(z)=\frac{1}{96f(z)^{\frac{3}{2}}}\begin{pmatrix}L_{1} &L_{2}\\ L_{3} & -L_{1}\end{pmatrix},
\end{equation}
where
\begin{equation*}
L_{1}=(3\alpha(z)+1) \varrho(z)^{2}-(3\alpha(z)-1) \varrho(z)^{-2},
\end{equation*}
\begin{equation*}
L_{2}=iD_{\infty}^2\left((1+3\alpha(z)) \varrho(z)^{2}+(3\alpha(z)-1) \varrho(z)^{-2}-6\beta(z)\right),
\end{equation*}
and
\begin{equation*}
L_{3}=iD_{\infty}^{-2}\left((1+3\alpha(z)) \varrho(z)^{2}+ (3\alpha(z)-1)\varrho(z)^{-2}+6\beta(z)\right),
\end{equation*}
with
\begin{equation}\label{eq:ab(z)}
\alpha(z)=D(z)^2z^{-\nu}+D(z)^{-2}z^{\nu}, \quad \beta(z)=D(z)^2z^{-\nu}-D(z)^{-2}z^{\nu}.
\end{equation}
It is seen from \eqref{Szego:D} and \eqref{Djumps} that $\alpha(z)$ and $\beta(z)/\sqrt{(z-\xi)(z-\xi^{-1})}$   are analytic near $z=\xi^{\pm 1}$. Moreover, we have  from \eqref{Szego:D} the following expansions
\begin{equation}\label{eq:abexpand1}
\alpha(z)=4i\nu^2(\tau-1)^{-\frac{1}{2}}e^{-i\theta_{c}}(z-\xi)+O((z-\xi)^{2}),\quad z\rightarrow \xi,
\end{equation}
\begin{equation}\label{eq:abexpand2}
\beta(z)=-4\nu(\tau-1)^{-\frac{1}{4}}e^{-i(\frac{\theta_{c}}{2}-\frac{\pi}{4})}(z-\xi)^{\frac{1}{2}}++O((z-\xi)^{\frac{3}{2}}),\quad z\rightarrow \xi,
\end{equation}
\begin{equation}\label{eq:abexpand3}
\alpha(z)=-4i\nu^2(\tau-1)^{\frac{1}{2}}e^{i\theta_{c}}(z-\xi^{-1})+O((z-\xi^{-1})^{2}),\quad z\rightarrow \xi^{-1},
\end{equation}
and
\begin{equation}\label{eq:abexpand4}
\beta(z)=-4\nu(\tau-1)^{\frac{1}{4}}e^{i(\frac{\theta_{c}}{2}-\frac{\pi}{4})}(z-\xi^{-1})^{\frac{1}{2}}+O((z-\xi^{-1})^{\frac{3}{2}}),\quad z\rightarrow \xi^{-1}.
\end{equation}
From \eqref{phi} and \eqref{eq:f}, we have
\begin{equation}\label{eq:fexpand}
f(z)^{-\frac{3}{2}}=\left(\frac{3}{2}\phi(z)\right)^{-1}=(z-\xi)^{-\frac{3}{2}}(\tau-1)^{-\frac{3}{4}}e^{i(\frac{3}{2}\theta_{c}+\frac{3}{4}\pi)}
+\kappa(z-\xi)^{-\frac{1}{2}}+O((z-\xi)^{\frac{1}{2}}),\quad z\rightarrow \xi,
\end{equation}
where
\begin{equation}\label{kappa}
\kappa=-\frac{3}{10}\tau^{\frac{1}{2}}(\tau-1)^{-\frac{5}{4}}e^{i(\theta_{c}+\frac{3}{4}\pi)}+\frac{6}{5}(\tau-1)^{-\frac{3}{4}}e^{i(\frac{1}{2}\theta_{c}+\frac{3}{4}\pi)}
-\frac{3\tau}{40}(\tau-1)^{-\frac{5}{4}}e^{i(\frac{3}{2}\theta_{c}+\frac{1}{4}\pi)}.
\end{equation}
From \eqref{eq:Nw}, we have
\begin{equation}\label{w2}
\varrho(z)^2=\frac{1}{2}\tau^{\frac{1}{2}}(\tau-1)^{-\frac{1}{4}}e^{-\frac{1}{4}\pi i}(z-\xi)^{\frac{1}{2}}\left(1+\frac{i\tau}{8(\tau-1)^{\frac{1}{2}}}(z-\xi)+O((z-\xi)^{2})\right), \quad z\rightarrow \xi,
\end{equation}
and
\begin{equation}\label{w-2}
\varrho(z)^{-2}=2\tau^{-\frac{1}{2}}(\tau-1)^{\frac{1}{4}}e^{\frac{1}{4}\pi i}(z-\xi)^{-\frac{1}{2}}+\frac{1}{4}\tau^{\frac{1}{2}}(\tau-1)^{-\frac{1}{4}}e^{-\frac{1}{4}\pi i}(z-\xi)^{\frac{1}{2}},\quad z\rightarrow \xi.
\end{equation}
Therefore, we obtain
\begin{equation}\label{Q+}
Q^{(+)}(z)=\frac{B_2}{(z-\xi)^2}+\frac{B_{1}}{z-\xi}+O(z-\xi),\quad z\rightarrow\xi,
\end{equation}
with
\begin{equation}\label{eq:B+}
B_2=\frac{5}{48}\tau^{-1/2}(\tau-1)^{-1/2}e^{i\frac{3}{2}\theta_{c}}\begin{pmatrix}1 &-iD_{\infty}^{2}\\ -iD_{\infty}^{-2} & -1\end{pmatrix},
\end{equation}
and
\begin{equation}\label{C+11}
\begin{aligned}
B_{1}&=\frac{1}{32}\tau^{-1/2}(\tau-1)^{-1}e^{i\frac{1}{2}\theta_{c}} \left(i\tau e^{i\theta_{c}}- \tau^{1/2} e^{i\frac{1}{2}\theta_{c}}+4(\tau-1)^{1/2}+8i\nu^2  \right)\begin{pmatrix}1 &0\\ 0& -1\end{pmatrix} \\
&+\frac{i}{96}\tau^{-1/2}(\tau-1)^{-1}e^{i\frac{1}{2}\theta_{c}} \left(4i\tau e^{i\theta_{c}}+3 \tau^{1/2} e^{i\frac{1}{2}\theta_{c}}-12(\tau-1)^{1/2}-24i\nu^2 \right)\begin{pmatrix}0 &D_{\infty}^{2}\\ D_{\infty}^{-2}& 0\end{pmatrix} \\
&+\frac{i}{4}\nu(\tau-1)^{-1}e^{i\theta_{c}}\begin{pmatrix}0 &-D_{\infty}^{2}\\ D_{\infty}^{-2}& 0\end{pmatrix}.
\end{aligned}
\end{equation}

Next,  we construct a similar  local parametrix  in the neighborhood of $\xi^{-1}$. The matrix-valued function $P^{({-})}(z)$ satisfies the following RH problem.
\begin{rhp}\label{PHP:P-}
\item[\rm (1)] $P^{({-})}(z)$ is analytic in $U(\xi^{-1},\delta)\setminus \Sigma_S$.
\item[\rm (2)] $P^{({-})}(z)$  satisfies the same  jump condition as $S(z)$ on $U(\xi^{-1},\delta)\cap \Sigma_S$.
\begin{equation}\label{eq:P-Jump}
  P^{({-})}_{+}(z)=P^{({-})}_{-}(z)
  \left\{ \begin{array}{lll}
\begin{pmatrix}1 & 0 \\ z^{-\nu} e^{2n\widetilde{\phi}(z)} & 1\end{pmatrix} , & \hbox{for $z\in  \Sigma_E$}\cap U(\xi^{-1},\delta),\\[.5cm]
\begin{pmatrix}1&z^{\nu}e^{-2n\widetilde{\phi}(z)}\\0 &1\end{pmatrix} , & \hbox{for $z\in C_{1}$} \cap U(\xi^{-1},\delta),\\[.5cm]
\begin{pmatrix}0& z^{\nu} \\-z^{-\nu} &0\end{pmatrix}, & \hbox{for $z\in C_{2}$} \cap U(\xi^{-1},\delta),\\[.5cm]
 \begin{pmatrix}1 & 0 \\z^{-\nu} e^{2n\widetilde{\phi}(z)} &1\end{pmatrix}  ,   &   \hbox{for $z\in  \Sigma_I$}\cap U(\xi^{-1},\delta).\\
  \end{array}
\right.
  \end{equation}
   \item[\rm (3)] On the boundary of $U(\xi^{-1},\delta)$, $P^{({-})}(z)$ satisfies the matching condition
  \begin{equation}\label{eq:Matching-}
P^{({-})}(z)N^{-1}(z)= I+O(1/n).
 \end{equation}
\end{rhp}
Similarly, the solution to the above RH problem can be expressed in terms of  the Airy function
\begin{equation}\label{solu:P-}
P^{({-})}(z)=\widetilde{E}(z)\Phi^{\mathrm{(Ai)}}(n^{\frac{2}{3}}\widetilde{f}(z))\sigma_{3}e^{n\widetilde{\phi}(z)\sigma_{3}}z^{-\frac{\nu}{2}\sigma_{3}},
\end{equation}
where
\begin{equation}\label{eq:tildef}
\widetilde{f}(z)=\left(\frac{3}{2}\widetilde{\phi}(z)\right )^{\frac{2}{3}}=(\tau-1)^{\frac{1}{2}}e^{i(\theta_{c}+\frac{\pi}{2})}(z-\xi^{-1}),\quad z \rightarrow \xi^{-1},
\end{equation}
and $\widetilde{E}(z)$ is defined as
\begin{equation}
\widetilde{E}(z)=N(z)z^{\frac{\nu}{2}\sigma_{3}}\sigma_{3}\frac{1}{\sqrt{2}}\begin{pmatrix}1&-i\\-i&1\end{pmatrix}(n^{\frac{2}{3}}\widetilde{f}(z))^{\frac{\sigma_{3}}{4}}.
\end{equation}
It is readily  seen that $\widetilde{E}(z)$ is analytic in $U(\xi^{-1},\delta)$. From \eqref{solu:N}, \eqref{solu:P-} and \eqref{AiryAsyatinfty}, we arrive at the uniform expansion for $z\in \partial U(\xi^{-1},\delta)$ as $n\to\infty$
\begin{equation}\label{asy:P-N}
\begin{aligned}
&P^{({-})}(z)N(z)^{-1}\\
&=N(z)z^{\frac{\nu}{2}\sigma_{3}}\sigma_{3}\left(I+\frac{1}{48\widetilde{f}(z)^{\frac{3}{2}}n}
\begin{pmatrix}1&6i\\6i&-1\end{pmatrix}+O\left(\frac 1 {n^{2}}\right)\right)\sigma_{3}z^{-\frac{\nu}{2}\sigma_{3}}N(z)^{-1},\\
&=I+\frac{Q^{(-)}(z)}{n}+O\left(\frac 1 {n^{2}}\right).
\end{aligned}
\end{equation}
Here, the matrix $Q^{(-)}(z)$ is given by
\begin{equation}
Q^{(-)}(z)=\frac{1}{96\widetilde{f}(z)^{\frac{3}{2}}}\begin{pmatrix}\widetilde{L}_{1} &\widetilde{L}_{2}\\ \widetilde{L}_{3} & -\widetilde{L}_{1}\end{pmatrix},
\end{equation}
with the entries
\begin{equation*}
\widetilde{L}_{1}=
(1-3\alpha(z)) \varrho(z)^{2}+(3\alpha(z)+1) \varrho(z)^{-2},
\end{equation*}

\begin{equation*}
\widetilde{L}_{2}=iD_{\infty}^{2}\left((1-3\alpha(z)) \varrho(z)^{2}-(3\alpha(z)+1) \varrho(z)^{-2}+6\beta(z)\right),
\end{equation*}
and
\begin{equation*}
\widetilde{L}_{3}=iD_{\infty}^{-2}\left((1-3\alpha(z)) \varrho(z)^{2}-(3\alpha(z)+1) \varrho(z)^{-2}-6\beta(z)\right);
\end{equation*}
see \eqref{eq:ab(z)} for the definition of $\alpha(z)$ and $\beta(z)$.
From \eqref{tildephi} and \eqref{eq:tildef}, we have the expansion as $z\rightarrow \xi^{-1}$
\begin{equation}\label{eq:tilde-f}
\widetilde{f}(z)^{-\frac{3}{2}}=\left(\frac{3}{2}\widetilde{\phi}(z)\right)^{-1}
=(z-\xi^{-1})^{-\frac{3}{2}}(\tau-1)^{-\frac{3}{4}}e^{-i(\frac{3}{2}\theta_{c}+\frac{3}{4}\pi)}
+\widetilde{\kappa}(z-\xi^{-1})^{-\frac{1}{2}}+O((z-\xi^{-1})^{\frac{1}{2}}),
\end{equation}
where
\begin{equation}\label{tilde-kappa}
\widetilde{\kappa}=-\frac{3}{10}\tau^{\frac{1}{2}}(\tau-1)^{-\frac{5}{4}}e^{-i(\theta_{c}+\frac{3}{4}\pi)}+\frac{6}{5}(\tau-1)^{-\frac{3}{4}}e^{-i(\frac{1}{2}\theta_{c}+\frac{3}{4}\pi)}
-\frac{3\tau}{40}(\tau-1)^{-\frac{5}{4}}e^{-i(\frac{3}{2}\theta_{c}+\frac{1}{4}\pi)}.
\end{equation}
From \eqref{eq:Nw}, we have
\begin{equation}\label{w2-1}
\varrho(z)^2=2\tau^{-\frac{1}{2}}(\tau-1)^{\frac{1}{4}}e^{-\frac{1}{4}\pi i}(z-\xi^{-1})^{-\frac{1}{2}}+\frac{1}{4}\tau^{1/2}(\tau-1)^{-1/4}e^{\frac{1}{4}\pi i}(z-\xi^{-1})^{\frac{1}{2}}+O(z-\xi^{-1})^{\frac{3}{2}},
\end{equation}
and
\begin{equation}\label{w-2-1}
\varrho(z)^{-2}=\frac{1}{2}\tau^{\frac{1}{2}}(\tau-1)^{-\frac{1}{4}}e^{\frac{1}{4}\pi i}(z-\xi^{-1})^{\frac{1}{2}}\left(1-\frac{i\tau}{8(\tau-1)^{\frac{1}{2}}}(z-\xi^{-1})+O((z-\xi^{-1})^{2})\right),
\end{equation}
as $z\rightarrow \xi^{-1}$.
Therefore, we have
\begin{equation}\label{Q-}
Q^{(-)}(z)=\frac{\widetilde{B}_2}{(z-\xi^{-1})^2}+  \frac{\widetilde{B}_{1}}{(z-\xi^{-1})}+O(z-\xi^{-1}),\quad z\rightarrow\xi^{-1},
\end{equation}
with
\begin{equation}\label{eq:B-}
\widetilde{B}_2=\frac{5}{48}\tau^{-1/2}(\tau-1)^{-1/2}e^{-i\frac{3}{2}\theta_{c}}\begin{pmatrix}1 &iD_{\infty}^{2}\\ iD_{\infty}^{-2} & -1\end{pmatrix},
\end{equation}
and 
\begin{equation}\label{C-11}
\begin{aligned}
\widetilde{B}_{1}&=\frac{1}{32}\tau^{-1/2}(\tau-1)^{-1}e^{-i\frac{1}{2}\theta_{c}} \left(-i\tau e^{-i\theta_{c}}- \tau^{1/2} e^{-i\frac{1}{2}\theta_{c}}+4(\tau-1)^{1/2}-8i\nu^2  \right)\begin{pmatrix}1 &0\\ 0& -1\end{pmatrix} \\
&+\frac{i}{96}\tau^{-1/2}(\tau-1)^{-1}e^{-i\frac{1}{2}\theta_{c}} \left(4i\tau e^{-i\theta_{c}}-3 \tau^{1/2} e^{-i\frac{1}{2}\theta_{c}}+12(\tau-1)^{1/2}-24i\nu^2 \right)\begin{pmatrix}0 &D_{\infty}^{2}\\ D_{\infty}^{-2}& 0\end{pmatrix} \\
&+\frac{i}{4}\nu(\tau-1)^{-1}e^{-i\theta_{c}}\begin{pmatrix}0 &D_{\infty}^{2}\\ -D_{\infty}^{-2}& 0\end{pmatrix}.
\end{aligned}
\end{equation}

\subsection{Final transformation: $S\to R$}
The final transformation is defined as follows
\begin{equation}\label{eq:RR}
  R(z)=\left\{ \begin{array}{ll}
 S(z)N(z)^{-1}, & \hbox{$z\in \mathbb{C}\backslash \{{U(\xi,\delta)\cup}U(\xi^{-1},\delta)\cup\Sigma_{S}$\},} \\
 S(z)P^{({+})}(z)^{-1}, & \hbox{$z\in U(\xi,\delta)\backslash \Sigma_{S}$,}\\
 S(z)P^{({-})}(z)^{-1},   & \hbox{$z\in U(\xi^{-1},\delta)\backslash \Sigma_{S}$.}             \end{array}
\right.
 \end{equation}
 Then, we have the following RH problem for $R(z)$.
\begin{rhp} \label{RHP: finalR}
$R(z)$ satisfies the following properties.
\item{(1)} $R(z)$ is analytic for $z\in \mathbb{C}\setminus\Sigma_{R}$, where the contour $\Sigma_{R}$ is illustrated in Figure \ref{tauR}.
\item{(2)} For  $z\in \Sigma_{R}$, we have $R_+(z)=R_-(z)J_{R}(z)$, where
 \begin{equation}\label{JumpRtilde}
 J_{R}(z)=\left\{\begin{aligned}
&P^{({+})}(z)N(z)^{-1},\quad &z&\in \partial U(\xi,\delta),\\
&P^{({-})}(z)N(z)^{-1},\quad &z&\in \partial U(\xi^{-1},\delta),\\
&N(z)J_{S}(z)N(z)^{-1},\quad &z&\in \Sigma_{R}\setminus(\partial U(\xi,\delta)\cup\partial U(\xi^{-1},\delta)).
\end{aligned}
\right.
 \end{equation}
\item{(3)} As $z\rightarrow\infty$, we have
\begin{equation}\label{Rtilde:infty}
R(z)=I+\frac{R_1}{z}+O\left(\frac 1 {z^{2}}\right).
\end{equation}
\end{rhp}

\begin{figure}[ht]
  \centering
  \includegraphics[width=10cm,height=5.2cm]{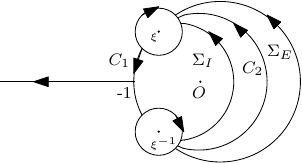}\\
  \caption{Contours and regions for the RH problem for $R(z)$}\label{tauR}
\end{figure}

From the matching conditions \eqref{eq:Matching+} and \eqref{eq:Matching-}, we have the following estimates as $n\rightarrow\infty$
\begin{equation}\label{JRtildeestimation}
J_{R}(z)=\left\{\begin{aligned}
&I+O\left(n^{-1}\right), &z&\in \partial U(\xi,\delta),\\
&I+O\left(n^{-1}\right), &z&\in \partial U(\xi^{-1},\delta),\\
&I+O\left(e^{-c_{2}n}\right),&z&\in \Sigma_{R}\setminus(\partial U(\xi,\delta)\cup\partial U(\xi^{-1},\delta)),
\end{aligned}\right.
\end{equation}
where $c_{2}$ is a positive constant. 
Therefore, $R$ satisfies a small-norm RH problem. By  the general theory for small-norm RH problems \cite[Section 7.2]{DKMVZ2} and \cite[Section 7.5]{D1}, we see that $R$ exists for $n$ large enough and satisfies the asymptotic approximation
\begin{equation}\label{Rtildeestimation}
R(z)=I+O\left(n^{-1}\right),\quad \mathrm{as}\quad n\rightarrow \infty,
\end{equation}
where the error term is uniform for $z\in \mathbb{C}\setminus\Sigma_{R}$.

\section{Proof of Theorem \ref{thm1}}\label{sec:proof1}
Tracing back the series of  invertible transformations $Y\to \widehat{Y}\to T\to S\to R$, we obtain
\begin{equation}\label{eq:YApprox1}
 Y(z)=e^{\frac{n}{2}\phi(z^{-})\sigma_{3}}n^{-\frac{ \nu}{2}\sigma_3} R(z)n^{\frac{ \nu}{2}\sigma_3} \left(\frac{z-z^{-}}{z}\right)^{\nu \sigma_3}e^{-\frac{n\tau}{2z}\sigma_3}z^{n\sigma_3}e^{-\frac{n}{2}\phi(z^{-})\sigma_{3}},
 \end{equation}
 for $|z|>1+\delta$ and
 \begin{equation}\label{eq:YApprox2}
 Y(z)=e^{\frac{n}{2}\phi(z^{-})\sigma_{3}}n^{-\frac{ \nu}{2}\sigma_3} R(z)n^{\frac{ \nu}{2}\sigma_3} (z-z^{-})^{\nu \sigma_3}i\sigma_2e^{-\frac{n\tau}{2}z\sigma_3}e^{\frac{n}{2}\phi(z^{-})\sigma_{3}},
 \end{equation}
  for  $|z|<\delta$, where
$
   \sigma_2=\begin{pmatrix}0 & -i\\ i & 0\end{pmatrix}
$
  is one of the Pauli matrices.

Thus, we get from \eqref{eq:REst}, \eqref{eq:A2Exp} and \eqref{eq:YApprox1} that 
\begin{equation}\label{eq:Y1}
 (Y_{-1})_{11}= -\frac{n}{2}\tau-\nu z^{-} +\frac{A^{(2)}_{11}}{n}+O(n^{-3/2}),
 \end{equation}
where
\begin{equation}
A^{(2)}_{11}=\frac{\nu^{2}}{z^{-}\lambda'(z^{-})^{2}}+\frac{3\nu^2\lambda''(z^{-})}{2\lambda'(z^{-})^{3}}
=-\frac{1}{2}\frac{\nu^2(z^{-})^{2}\tau}{(\tau+z^{-})^{2}},
\end{equation}
with $z^-$ defined in \eqref{eq:zpm} and $A^{(2)}$ given in \eqref{eq:A2Exp}.
Similarly,  from \eqref{eq:REst},  \eqref{eq:B2Exp},  \eqref{eq:A2Exp} and \eqref{eq:YApprox2} we obtain
\begin{equation}\label{eq:Y0n}
\begin{aligned}
\frac{d}{dz} \log (Y(z;n))_{21}|_{z=0}
&= -\frac{t}{2}+\frac{\nu}{z^{-}}+\frac{1}{n}\left(-\frac{A^{(2)}_{22}}{(z^{-})^{2}}
+\frac{2B^{(2)}_{22}}{(z^{-})^{3}}\right)+O(n^{-3/2}),
\end{aligned}
 \end{equation}
 where 
 \begin{equation}
-\frac{A^{(2)}_{22}}{(z^{-})^{2}}+\frac{2B^{(2)}_{22}}{(z^{-})^{3}}=-\frac{1}{2}\frac{\nu^2\tau}{(\tau+z^{-})^{2}}+\frac{\nu(\nu+1)}{(\tau+z^{-})}.
 \end{equation}
 By replacing $n$ and $\tau$ in \eqref{eq:Y0n} by $n+1$ and $\frac{n}{n+1}\tau$ respectively, we obtain
 \begin{equation}\label{eq:Y0}
\begin{aligned}
\frac{d}{dz} \log (Y(z;n+1))_{21}|_{z=0}
=& -\frac{t}{2}+ \frac{\nu}{z^{-}}+\frac{\nu }{n}\left(\frac{\tau }{(z^{-})^{2}\sqrt{1-\tau^2}}-\frac{1}{z^{-}}-\frac{1}{2}\frac{\nu\tau}{(\tau+z^{-})^{2}}+\frac{\nu+1}{(\tau+z^{-})}\right)\\
&+O(n^{-3/2}).
\end{aligned}
 \end{equation}

Similarly, we have from \eqref{eq:YApprox2} and \eqref{eq:REst} that
\begin{equation}\label{Asy:gammma}
Y_{12}(0;n)=\left(1+\frac{1}{n}\left(\frac{1}{2}
\frac{\nu^2z^{-}\tau}{(\tau+z^{-})^{2}}-
\frac{1}{2}\frac{\nu(\nu-1)z^{-}}{\tau+z^{-}}\right )\right)|z^{-}|^{\nu}
+O(n^{-3/2}),
\end{equation}
\begin{equation}\label{Asy:Y11}
Y_{11}(0;n)=-d_{1}\nu \lambda'(z^{-})^{-2\nu-1}|z^{-}|^{-2\nu-1}n^{-\nu-1/2}e^{n\phi(z^{-})}+O(n^{-\nu-1}),
\end{equation}
and
\begin{equation}\label{Asy:Y22}
Y_{22}(0;n)= d_{1}^{-1}\lambda'(z^{-})^{2\nu-1}|z^{-}|^{2\nu-1}n^{\nu-1/2}e^{-n\phi(z^{-})}+O(n^{\nu-1}),
\end{equation}
where $d_{1}=\frac{\sqrt{2 \pi}}{\Gamma(-\nu+1)}$.
By replacing $n$ and $\tau$ in \eqref{Asy:Y11} and \eqref{Asy:Y22} by $n+1$ and $\frac{n}{n+1}\tau$ respectively, we obtain
\begin{equation}\label{Asy:Y11n}
Y_{11}(0;n+1)=-d_{1}\nu\lambda'(z^{-})^{-2\nu-1}|z^{-}|^{-2\nu-1}n^{-\nu-1/2}e^{n\phi(z^{-})}z^{-}+O(n^{-\nu-1}),
\end{equation}
and
\begin{equation}\label{Asy:Y22n}
Y_{22}(0;n+1)= d_{1}^{-1}\lambda'(z^{-})^{2\nu-1}|z^{-}|^{2\nu-1}n^{\nu-1/2}e^{-n\phi(z^{-})}\frac{1}{z^{-}}+O(n^{\nu-1}).
\end{equation}

Substituting \eqref{eq:Y1}, \eqref{eq:Y0} and \eqref{Asy:gammma}-\eqref{Asy:Y22n}
into the differential identity \eqref{eq:dnuD},
we have
\begin{equation}\label{Asy:DnulogDn}
\begin{aligned}
\frac{d}{d\nu}\log D_{n,\nu}(n\tau)=&n\log(-z^{-})+\frac{n\tau}{2}(z^{-}-\frac{1}{z^{-}})-\nu\log n-\nu+\frac{1}{2}-\frac{\nu}{2}\log(1-\tau^2)\\
&-\nu\frac{d}{d\nu}\log\Gamma(1-\nu)+O\left(\frac{1}{n}\right),\quad n\rightarrow\infty,
\end{aligned}
\end{equation}
where the error term is uniform for $\tau$ in any compact subset of $(0,1)$ and the parameter $\nu$ in any compact subset of $\mathbb{C}\setminus \{1,2,3\cdots\}$.
Integrating  with respect to $\nu$ on both sides of \eqref{Asy:DnulogDn} yields
\begin{equation}\label{eq:DnAsy}
\begin{aligned}
\log D_{n,\nu}(n\tau)-\log D_{n,0}(n\tau)=&n\nu\log(-z^{-})+\frac{n\nu }{2}(z^{-}-\frac{1}{z^{-}})\tau-\frac{\nu^2}{2}\log n-\frac{\nu^2}{4}\log(1-\tau^2)\\
&-\frac{\nu^2}{2}+\frac{\nu}{2}-\nu\log\Gamma(1-\nu)+\int_{0}^{\nu}\log\Gamma(1-x)dx
+O(n^{-1})
,
\end{aligned}
\end{equation}
as $n\rightarrow\infty$.
Recalling the following integral representation of  the Barnes  $G$-function \cite[Eq. (5.17.4)]{Olver}
\begin{equation}\label{B-G}
\log G(z+1)=\frac{z}{2}\log 2\pi-\frac{z(z+1)}{2}+z\log\Gamma(z+1)-\int_{0}^{z}\log\Gamma(x+1)dx,
\end{equation}
  we further have
\begin{equation}\label{eq:ungappedDn}
\begin{aligned}
\log D_{n,\nu}(n\tau)-\log D_{n,0}(n\tau)=&n\nu\log(-z^{-})+\frac{n\nu }{2}(z^{-}-\frac{1}{z^{-}})\tau-\frac{\nu^2}{2}\log n-\frac{\nu^2}{4}\log(1-\tau^2)\\
&+\frac{\nu}{2}\log 2\pi+\log G(1-\nu)+O(n^{-1}),\quad n\rightarrow\infty
\end{aligned}
\end{equation}
for   $\nu\in\mathbb{C}\setminus \{1,2,3\cdots\}$.
From \cite[Lemma 7.1]{BDJ}, we have the asymptotic approximation of the logarithm of the Toeplitz determinant for the initial value $\nu=0$
\begin{equation}\label{eq:DnAsy0}
\log D_{n,0}(n\tau)=\frac{n^2}{4}\tau^2+O(e^{-cn}), \quad c >0,
\end{equation}
where the error term is uniform for $\tau$ in any compact subset of $(0,1)$.
Substituting \eqref{eq:DnAsy0} into \eqref{eq:ungappedDn}, we obtain \eqref{Integral:logDn1}.  The case of $\nu=1,2,3\cdots$ follows from
 \eqref{Integral:logDn1} and the symmetry 
\begin{equation}\label{eq: Symm}
\log D_{n,\nu}(t) =\log D_{n,-\nu}(t),\quad \nu\in\mathbb{N}.
\end{equation}
This completes the proof of Theorem \ref{thm1}.

\section{Proof of Theorem \ref{thm2}}\label{sec:proof2}
Tracing back the transformations $R\rightarrow S\rightarrow T\rightarrow \widehat{Y}\rightarrow Y$, we have for $|z-\xi^{\pm 1}|>\delta$
\begin{equation}\label{RtoY}
Y(z)=\left\{\begin{aligned}
&e^{-\frac{nl}{2}\sigma_{3}}R(z)N(z)e^{\frac{nl}{2}\sigma_{3}}e^{ng(z)\sigma_{3}},
&\mbox{for~even}~n,\\
&e^{-\frac{nl}{2}\sigma_{3}}\sigma_{3}R(z)N(z)\sigma_{3}e^{\frac{nl}{2}\sigma_{3}}
e^{ng(z)\sigma_{3}},&\mbox{for~odd}~n.
\end{aligned}
\right.
\end{equation}

To get the large-$z$ expansion of $N(z)$, we obtain from \eqref{solu:N}-\eqref{Szego:D} that
\begin{equation}\label{eq:X}
X(z)=\frac{1}{2}\begin{pmatrix}w+w^{-1}&-i(w-w^{-1})\\i(w-w^{-1})&w+w^{-1}\end{pmatrix}
=I-\frac{1}{4}(\xi-\xi^{-1})\begin{pmatrix}0&-i\\i&0\end{pmatrix}\frac{1}{z}
+O\left(\frac{1}{z^{2}}\right ),\quad z\rightarrow\infty,
\end{equation}
and
\begin{equation}\label{eq:szegoD}
D(z)^{-\sigma_{3}}=\left(I+\frac{1}{z}\frac{\nu}{\tau}\sigma_{3}
+O\left(\frac{1}{z^{2}}\right )\right)D_{\infty}^{-\sigma_{3}},\quad z\rightarrow\infty.
\end{equation}
Inserting \eqref{eq:X} and \eqref{eq:szegoD} into \eqref{solu:N}, we have the  large-$z$ asymptotics of $N(z)$
\begin{equation}\label{N:infty}
N(z)=I+\frac{N_{1}}{z}+O\left(\frac{1}{z^{2}}\right ), \quad z\rightarrow \infty,
\end{equation}
with
\begin{equation}\label{eq:N11}
(N_{1})_{11}=\frac{\nu}{\tau}.
\end{equation}
For later use, we also derive from \eqref{solu:N}-\eqref{Szego:D} the expansion of $N(z)$ as $z\to 0$
\begin{equation}\label{eq:N0}
N(z)=\begin{pmatrix}D_{\infty}^2(\tau-1)^{\frac{1}{2}}\tau^{-\frac{1}{2}}(1-(1+\nu)\tau^{-1}z+O(z^2))&\tau^{-\frac{1}{2}}+O(z)\\-\tau^{-\frac{1}{2}}(
1-((1+\nu)\tau^{-1}-1)z+O(z^2))&D_{\infty}^{-2}(\tau-1)^{\frac{1}{2}}\tau^{-\frac{1}{2}}+O(z)\end{pmatrix}. \end{equation}
By \eqref{g-function}, we have the following asymptotic expansion
\begin{equation}\label{G:infty}
e^{ng(z)\sigma_{3}}z^{-n\sigma_{3}}=I+\frac{G_{1}}{z}+O(z^{-2}),\quad z \rightarrow \infty.
\end{equation}
where
\begin{equation}
(G_{1})_{11}=-t\left(\tau^{-1}-\frac{1}{2}\tau^{-2}\right).%
\end{equation}
In virtue of \eqref{asy:P+N}, \eqref{asy:P-N}, \eqref{JumpRtilde} and \eqref{Rtilde:infty}, we have the following  asymptotic expansion 
\begin{equation}\label{error:R}
R(z)=I+\frac{R^{(1)}(z)}{n}+O(n^{-2}), \quad\text{for}~ z\in \partial U(\xi,\delta)\cup\partial U(\xi^{-1},\delta).
\end{equation}
Here the coefficient $R^{(1)}(z)$ behaves like $1/z$ as $z\rightarrow\infty$, and satisfies the jump relation
\begin{equation}
R^{(1)}_{+}(z)-R^{(1)}_{-}(z)=\left\{\begin{aligned}
&Q^{(+)}(z),\quad z\in \partial U(\xi,\delta),\\
&Q^{(-)}(z),\quad z\in \partial U(\xi^{-1},\delta),
\end{aligned}\right.
\end{equation}
with $Q^{(\pm)}(z)$ defined in \eqref{Q+} and \eqref{Q-}.
Therefore, we have from \eqref{Q+} and \eqref{Q-} that
\begin{equation}\label{eq:R1}
R^{(1)}(z)=\left\{\begin{aligned}
&\frac{B_{1}}{z-\xi}+\frac{\widetilde{B}_1}{z-\xi^{-1}}+\frac{B_{2}}{(z-\xi)^2}+\frac{\widetilde{B}_2}{(z-\xi^{-1})^2}, &z \notin \overline{U(\xi,\delta)}\cup\overline{U(\xi^{-1},\delta)},\\
&\frac{B_{1}}{z-\xi}+\frac{\widetilde{B}_1}{z-\xi^{-1}}+\frac{B_{2}}{(z-\xi)^2}+\frac{\widetilde{B}_2}{(z-\xi^{-1})^2}-Q^{({\pm})}(z), &z \in U(\xi,\delta)\cup U(\xi^{-1},\delta).
\end{aligned}\right.
\end{equation}
Expanding $R^{(1)}(z)$ as $z\rightarrow\infty$, we obtain from \eqref{C+11} and \eqref{C-11} that
\begin{equation}\label{R11tilde}
R(z)=I+\frac{R_1}{z}+O\left(\frac 1{z^2}\right ), \quad
\end{equation}
where 
\begin{equation}\label{eq:R111}
\left(R_1\right)_{11}=\frac{(B_{1}+\widetilde{B}_1)_{11}}{n}+O(n^{-2})=\frac{1}{8n}(1-4\nu^2)(\tau-1)^{-1}\tau^{-1}+O(n^{-2}).
\end{equation}
From \eqref{eq:R1}, and taking into account the expressions \eqref{eq:B+}, \eqref{C+11},  \eqref{eq:B-} and \eqref{C-11},  we obtain the following expansions  as $z\rightarrow0$
\begin{equation}\label{R210}
\begin{aligned}
R_{21}(z)=&\frac{(\tau-1)^{-\frac{1}{2}}}{24n}D_{\infty}^{-2} \left[\left(2+\tau^{-1}-12\nu^2\tau^{-1}\right)\right.
+\left(2-3\tau^{-1}-4\tau^{-2}(3\tau-4)\right.\\
&~\quad\left.\left.-12\nu^2\tau^{-2}(\tau-4)
-24\nu\tau^{-1}\right)z+O(z^2)\right]+O(n^{-2}),
\end{aligned}
\end{equation}
and
\begin{equation}\label{R220}
\begin{aligned}
R_{22}(z)=&1+\frac{1}{24n}\left(-3(\tau-1)^{-1}+\tau^{-1}
+12\nu^2\tau^{-1}(\tau-1)^{-1}\right)\\
&+\frac{1}{24n}\left[\left(3(\tau-1)^{-1}\tau^{-1}-4\tau^{-2}(\tau-4)\right.\right.\\
&~~\quad\quad\quad\left.\left.+12\nu^2\tau^{-2}(\tau-1)^{-1}(3\tau-4)\right)z+O(z^2)\right]
+O(n^{-2}).
\end{aligned}
\end{equation}


From \eqref{RtoY}, \eqref{N:infty}, \eqref{eq:N11}, \eqref{G:infty} and \eqref{eq:R111}, we have
\begin{align}\label{Y-1}
(Y_{-1})_{11}&=(R_{1})_{11}+(N_{1})_{11}+(G_{1})_{11}\\ \nonumber
&=-t(\tau^{-1}-\frac{1}{2}\tau^{-2})+\frac{\nu}{\tau}+\frac{1}{8n}(1-4\nu^2)(\tau-1)^{-1}\tau^{-1}+O(n^{-2}). 
\end{align}
From  \eqref{RtoY}, \eqref{g-function}, \eqref{eq:N0}, \eqref{R210} and  \eqref{R220}, we have
\begin{align}\label{eq:Y210}
\frac{d}{dz} \log (Y)_{21}(z,n+1)|_{z=0}=& (n+1)g'(0)+\frac{d}{dz} \log (R_{21}N_{11}+R_{22}N_{21})(z)|_{z=0}\nonumber\\
=&-t(\tau^{-1}-\frac{1}{2}\tau^{-2})-\frac{\nu}{\tau}+\frac{1}{8n}(\tau-1)^{-1}\tau^{-1}\nonumber\\
&-\frac{1}{2n}(\tau-1)^{-1}\tau^{-1}\nu^{2}+O(n^{-2}).
\end{align}
Similarly, we derive from  \eqref{RtoY}, \eqref{g-function}, \eqref{eq:N0} and \eqref{eq:R1} that
\begin{equation}\label{gamma-n}
Y_{12}(0;n)=e^{-nl}\tau^{-\frac{1}{2}}\left(1+\frac{1}{24n}(3(\tau-1)^{-1}+2-12\nu^{2}(\tau-1)^{-1})+O(n^{-2})\right),
\end{equation}
\begin{equation}\label{Y110}
Y_{11}(0;n)=e^{n\pi i}\left(1-\tau^{-1}\right)^{\nu+\frac{1}{2}}+O(n^{-2})\quad\mbox{and}\quad Y_{22}(0;n)=e^{-n\pi i}\left(1-\tau^{-1}\right)^{-\nu+\frac{1}{2}}+O(n^{-2}).
\end{equation}
By replacing $n$ and $\tau$ in \eqref{Y110} by $n+1$ and $\frac{n}{n+1}\tau$ respectively, we obtain
\begin{equation}\label{Y11n}
Y_{11}(0;n+1)=e^{(n+1)\pi i}\left(1-\tau^{-1}\right)^{\nu+\frac{1}{2}}\left(1-\frac{\nu+\frac{1}{2}}{n(\tau-1)}\right)+O(n^{-2}),
\end{equation}
and
\begin{equation}\label{Y22n}
Y_{22}(0;n+1)=e^{-(n+1)\pi i}\left(1-\tau^{-1}\right)^{-\nu+\frac{1}{2}}\left(1+\frac{\nu-\frac{1}{2}}{n(\tau-1)}\right)+O(n^{-2}).
\end{equation}

From the differential identity \eqref{eq:dnuD} and \eqref{Y-1}-\eqref{Y22n}, we have

\begin{equation}\label{Asy:DnulogDn1}
\begin{aligned}
\frac{d}{d\nu}\log D_{n,\nu}(t)=\nu\log(1-\tau^{-1})+O( {1}/{n}),
\end{aligned}
\end{equation}
where the error term is uniform for $\nu$ in any compact subset of the complex plane  and $\tau$ in any compact subset of $(1,+\infty)$.
Integrating \eqref{Asy:DnulogDn1} from 0 to $\nu$, we have
\begin{equation}\label{Integral:logDn}
\begin{aligned}
\log D_{n,\nu}(t)=\log D_{n,0}(t)+\frac{\nu^2}{2}\log(1-\tau^{-1})+O( {1}/{n}).
\end{aligned}
\end{equation}
The asymptotic approximation of the logarithm of the Toeplitz determinant with $\nu=0$ for $\tau>1$ is given by
\begin{equation}\label{eq:initial}
 \log D_{n,0}(\tau)=n^2\left(\tau-\frac{3}{4}-\frac{1}{2}\log \tau\right)-\frac{1}{12}\log n-\frac{1}{8}\log(1-\tau^{-1})+\zeta'(-1)+O( {1}/{n});
\end{equation}
see \cite[Eq. (2.10) ]{KO} and \cite{PS}.
Substituting \eqref{eq:initial} into \eqref{Integral:logDn} yields \eqref{Integral:logDn3}. This completes the proof of Theorem \ref{thm2}.
\begin{appendices}
\section{Model RH problems}
\subsection{Airy parametrix}\label{AP}
Let $\omega=e^{2\pi i/3}$, we define
\begin{equation}
\Phi^{(\mathrm{Ai})}(z)=M\left\{
\begin{aligned}
&\begin{pmatrix}
\mathrm{Ai}(z) & \mathrm{Ai}(\omega^{2}z) \\
\mathrm{Ai}^{\prime}(z) & w^{2} \mathrm{Ai}^{\prime}(\omega^{2}z)
\end{pmatrix} e^{-i \frac{\pi}{6} \sigma_{3}}, &z&\in \mathrm{I},\\
&\begin{pmatrix}
\mathrm{Ai}(z) & \mathrm{Ai}(\omega^{2}z) \\
\mathrm{Ai}^{\prime}(z) & \omega^{2} \mathrm{Ai}^{\prime}(\omega^{2}z)
\end{pmatrix} e^{-i \frac{\pi}{6} \sigma_{3}}\begin{pmatrix}
1 & 0 \\
-1 & 1
\end{pmatrix},&z&\in \mathrm{II},\\
&\begin{pmatrix}
\mathrm{Ai}(s) & -\omega^{2}\mathrm{Ai}(\omega z) \\
\mathrm{Ai}^{\prime}(z) & -\mathrm{Ai}^{\prime}(\omega z)
\end{pmatrix} e^{-i \frac{\pi}{6} \sigma_{3}}\begin{pmatrix}
1 & 0 \\
1 & 1
\end{pmatrix},&z&\in \mathrm{III},\\
&\begin{pmatrix}
\mathrm{Ai}(z) & -\omega^{2} \mathrm{Ai}(\omega z) \\
\mathrm{Ai}^{\prime}(z) & -\mathrm{Ai}^{\prime}(\omega z)
\end{pmatrix} e^{-i \frac{\pi}{6} \sigma_{3}},&z&\in \mathrm{IV},
\end{aligned}
\right.
\end{equation}
where $\mathrm{Ai}(z)$ is the Airy function (cf. \cite[Chapter 9]{Olver}),
$$
M=\sqrt{2\pi}e^{\frac{1}{6}\pi i}\begin{pmatrix}1 & 0\\ 0 & -i \end{pmatrix},
$$
and the regions I-IV are shown in Figure \ref{Airy4}. It is easy to check that $\Phi^{(\mathrm{Ai})}(z)$ solves the following RH problem (cf. \cite[Chapter 7]{D1}).

\subsection*{RH problem for $\Phi^{(\mathrm{Ai})}(z)$}

\begin{itemize}
\item[\rm (1)] $\Phi^\mathrm{(Ai)}(z)$ is analytic for $z\in \mathbb{C}\setminus \bigcup^4_{k=1}\Sigma_{k}$;
\item[\rm (2)] $\Phi^\mathrm{(Ai)}(z)$ satisfies the jump relations $\Phi^\mathrm{(Ai)}_+(z)=\Phi^\mathrm{(Ai)}_-(z)J^{(\mathrm{Ai})}_k(z)$, $z\in\Sigma_{k}$, $k=1,2,3,4$, where
\begin{equation*}
J^{(\mathrm{Ai})}_1(z)=\begin{pmatrix}1 & 1 \\ 0 & 1 \end{pmatrix},\
J^{(\mathrm{Ai})}_2(z)=\begin{pmatrix}1 & 0 \\ 1 & 1 \end{pmatrix}, \ J^{(\mathrm{Ai})}_3(z)=\begin{pmatrix}0 & 1 \\ -1 & 0 \end{pmatrix}, \ J^{(\mathrm{Ai})}_4(z)=\begin{pmatrix}1 & 0 \\ 1 & 1 \end{pmatrix}.
\end{equation*}
\item[\rm (3)] $\Phi^\mathrm{(Ai)}(z)$ satisfies the following asymptotic behavior as $z\rightarrow \infty$:
\begin{equation}\label{AiryAsyatinfty}
\Phi^{(\mathrm{Ai})}(z)=z^{-\frac{\sigma_{3}}{4} }\frac{1}{\sqrt{2}} \begin{pmatrix}1 & i\\ i &1\end{pmatrix}\left(I+\frac{1}{48z^{\frac{3}{2}}}\begin{pmatrix}1 & 6i \\ 6i & -1 \end{pmatrix}+O\left(z^{-3}\right)\right) e^{-\frac{2}{3} z^{\frac{3}{2}} \sigma_{3}}.
\end{equation}
\end{itemize}
\begin{figure}[ht]
  \centering
  \includegraphics[width=8cm,height=6cm]{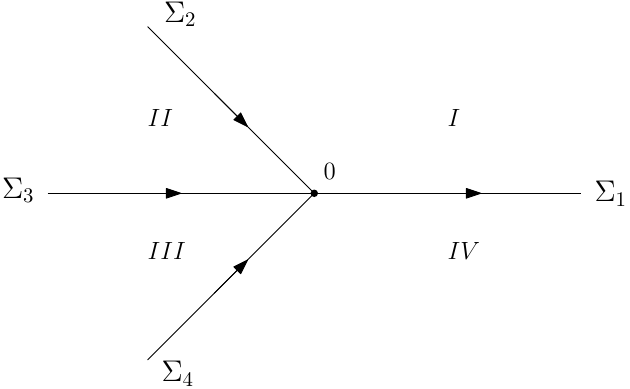}\\
  \caption{The jump contours and regions for $\Phi^{(\mathrm{Ai})}$}\label{Airy4}
\end{figure}

\subsection{Parabolic cylinder  parametrix}\label{PCP}
Define
\begin{equation}\label{eq:Parab1}Z_{0}(z)=2^{-\frac{\sigma_{3}}{2}}\begin{pmatrix}
D_{-\nu-1}(i z) & D_{\nu}(z) \\D_{-\nu-1}'(i z) & D_{\nu}'(z)\end{pmatrix}
\begin{pmatrix}
e^{i \frac{\pi}{2}(\nu+1)} & 0 \\
0 & 1
\end{pmatrix}
\end{equation}
and
\begin{equation}\label{eq:Parab2}Z_{n+1}(z)=Z_{n}(z) H_{n},\quad n=0,1,2,3,\end{equation}
where $D_{\nu}(z)$ denotes the standard parabolic cylinder function of order $\nu$ (cf. \cite[Chapter 12]{Olver}),
$$
H_{0}=\begin{pmatrix}1 & 0 \\ h_{0} & 1\end{pmatrix}, \
H_{1}=\begin{pmatrix}1 & h_{1} \\ 0 & 1\end{pmatrix}, \
H_{n+2}=e^{i \pi\left(\nu+\frac{1}{2}\right) \sigma_{3}} H_{n} e^{-i \pi\left(\nu+\frac{1}{2}\right) \sigma_{3}}, \ n=0,1,2
$$
with
\begin{equation}\label{h0}
h_{0}=-i \frac{\sqrt{2 \pi}}{\Gamma(\nu+1)}, \quad h_{1}=\frac{\sqrt{2 \pi}}{\Gamma(-\nu)} e^{i \pi \nu}, \quad 1+h_{0} h_{1}=e^{2 \pi i \nu}.
\end{equation}
\begin{figure}[h]
  \centering
  \includegraphics[width=6cm,height=5.5cm]{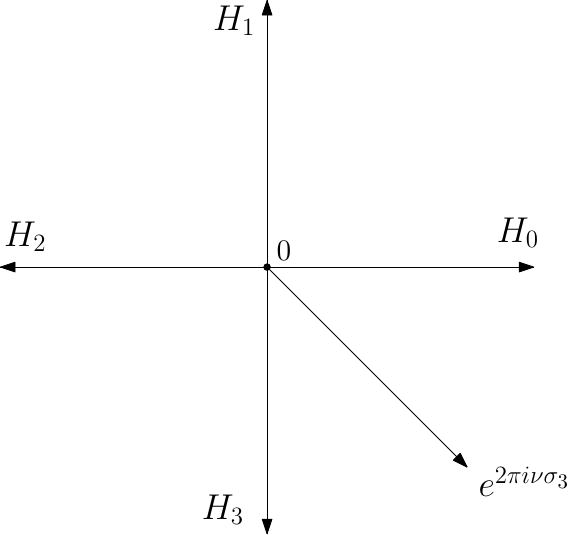}\\
  \caption{The jump contours and jump matrices for $\Phi^{(\mathrm{PC})}$}\label{PC}
\end{figure}

Let
$$
\Phi^{(\mathrm{PC})}(z)=\left\{\begin{aligned}
Z_{0}(z),\quad & \arg z \in\left(-\frac{\pi}{4}, 0\right), \\
Z_{1}(z),\quad & \arg z \in\left(0, \frac{\pi}{2}\right), \\
Z_{2}(z),\quad & \arg z \in\left(\frac{\pi}{2}, \pi\right), \\
Z_{3}(z),\quad & \arg z \in\left(\pi, \frac{3 \pi}{2}\right), \\
Z_{4}(z),\quad & \arg z \in\left(\frac{3 \pi}{2}, \frac{7 \pi}{4}\right),
\end{aligned}\right.
$$
then $\Phi^{(\mathrm{PC})}(z)$ solves the following RH problem (cf. \cite[Chapter 1.5]{FIKN}).

\begin{rhp}
\item[\rm (1)] $\Phi^\mathrm{(PC)}(z)$ is analytic for all $z\in \mathbb{C}\setminus \bigcup^5_{k=1}\Sigma_{k}$, where $\Sigma_{k}=\{z\in\mathbb{C}:\arg z=\frac{k\pi}{2}\}$, $k=1,2,3,4$ and  $\Sigma_{5}=\{z\in\mathbb{C}:\arg z=-\frac{\pi}{4}\}$; see Figure \ref{PC}.
\item[\rm (2)] $\Phi^\mathrm{(PC)}(z)$ satisfies the jump conditions as shown in Figure \ref{PC}.
\item[\rm (3)] 
As $z\to\infty$, we have
\begin{align}\label{PCAsyatinfty}
\Phi^\mathrm{(PC)}(z)=\begin{pmatrix}0 &1 \\ 1 & -z\end{pmatrix}2^{\frac{\sigma_3}{2}}\begin{pmatrix}
1+\frac{\nu(\nu+1)}{2z^2}+O\left(\frac{1}{z^4}\right) & \frac{\nu}{z}+O\left(\frac{1}{z^3}\right) \\ \frac{1}{z}+O\left(\frac{1}{z^3}\right) & 1-\frac{\nu(\nu-1)}{2z^2}+O\left(\frac{1}{z^4}\right)\end{pmatrix}
z^{-\nu\sigma_3} e^{\frac{z^{2}}{4}\sigma_{3}},
\end{align} 
where the branch for $z^{-\nu}$ is chosen such that $\arg z\in (- \frac{ \pi}{4},  \frac{7 \pi}{4})$.
\end{rhp}

\end{appendices}

\section*{Acknowledgements}
The work of Shuai-Xia Xu was supported in part by the National Natural Science Foundation of China under grant numbers 11971492 and 12371257, and by  Guangdong Basic and Applied Basic Research Foundation (Grant No.2022B1515020063). Yu-Qiu Zhao was supported in part by the National Natural Science Foundation of China under grant numbers 11971489 and 12371077.

\end{document}